\DeclareMathSymbol{\lsb@l}{\mathalpha}{letters}{`l}
\def\algoname{Set-Aside Greedy\xspace}
\renewcommand{\tilde}{\widetilde}
\renewcommand{\le}{\leqslant}
\renewcommand{\leq}{\leqslant}
\renewcommand{\ge}{\geqslant}
\renewcommand{\geq}{\geqslant}
\newcommand{\one}{\mathds{1}}
\newcommand{\bbN}{\mathbb{N}}
\newcommand{\bx}{\mathbf{x}}
\newcommand{\wv}{\mathbf{w}}
\newcommand{\xv}{\mathbf{x}}
\newcommand{\zv}{\mathbf{z}}
\newcommand{\yv}{\mathbf{y}}
\newcommand{\bv}{\mathbf{v}}
\newcommand{\NSW}{\mathsf{NSW}}
\newcommand\abs[1]{\left\lvert{#1}\right\rvert}
\newcommand{\set}[1]{\left\{#1\right\}}
\newcommand{\R}{\mathbb{R}}
\newcommand{\thetav}{\boldsymbol\theta}
\newtheorem{theorem}{Theorem}
\newtheorem{proposition}{Proposition}
\newtheorem{corollary}{Corollary}
\newtheorem{lemma}{Lemma}
\theoremstyle{definition}
\newtheorem{definition}{Definition}
\title{Proportionally Fair Online Allocation\\of Public Goods with Predictions}
\author{%
Siddhartha Banerjee\\Cornell University\\\texttt{sbanerjee@cornell.edu}
\And
Vasilis Gkatzelis\\Drexel University\\\texttt{gkatz@drexel.edu}
\And
Safwan Hossain\\University of Toronto\\\texttt{safwan.hossain@mail.utoronto.ca}
\AND
Billy Jin\\Cornell University\\\texttt{bzj3@cornell.edu}
\And
Evi Micha\\University of Toronto\\\texttt{emicha@cs.toronto.edu}
\And
Nisarg Shah\\University of Toronto\\\texttt{nisarg@cs.toronto.edu}
}
\begin{document}

\maketitle

\begin{abstract}
    We design online algorithms for the fair allocation of public goods to a set of $N$ agents over a sequence of $T$ rounds and focus on improving their performance using predictions. In the basic model, a public good arrives in each round, the algorithm learns every agent’s value for the good, and must irrevocably decide the amount of investment in the good without exceeding a total budget of $B$ across all rounds. The algorithm can utilize (potentially inaccurate) predictions of each agent’s total value for all the goods to arrive. We measure the performance of the algorithm using a \emph{proportional fairness} objective, which informally demands that every group of agents be rewarded in proportion to its size and the cohesiveness of its preferences. 

    In the special case of binary agent preferences and a unit budget, we show that $O(\log N)$ proportional fairness can be achieved without using any predictions, and that this is optimal even if perfectly accurate predictions were available. However, for general preferences and budget no algorithm can achieve better than $\Theta(T/B)$ proportional fairness without predictions. We show that algorithms with (reasonably accurate) predictions can do much better, achieving $\Theta(\log (T/B))$ proportional fairness. We also extend this result to a general model in which a batch of $L$ public goods arrive in each round and achieve $O(\log (\min(N,L) \cdot T/B))$ proportional fairness. Our exact bounds are parametrized as a function of the error in the predictions and the performance degrades gracefully with increasing errors.

\end{abstract}

\section{Introduction}

In classic online algorithms, the input is presented in stages and the algorithm needs to make irrevocable decisions at each stage without knowing the input from future stages. Its performance, called the competitive ratio, is measured by comparing the worst-case ratio of the achieved solution quality to the optimal solution quality in hindsight~\cite{borodin2005online}. The uncertainty regarding the future often forces such algorithms to make overly cautious decisions, resulting in pessimistic competitive ratios.

An emerging line of research asks whether one can improve the performance of online algorithms using predictions regarding the future~\cite{mitzenmacher_vassilvitskii_2021}.
Ideally, we would like the algorithm to perform well when the predictions are good, yet maintain reasonable performance even when the predictions are bad.
 More generally, one can hope to express the competitive ratio of the online algorithm directly in terms of the error in the predictions. This powerful paradigm has already received significant attention, for problems such as caching~\citep{lykouris2018competitive,rohatgi2020near,jiang2020online}, the secretary problem~\citep{dutting2021secretaries,antoniadis2020online,antoniadis2020secretary}, scheduling~\citep{lattanzi2020online}, the ski rental problem~\citep{purohit2018improving,wang2020online,banerjee2020improving}, set cover~\citep{bamas2020primal}, and other problems~\citep{almanza2021online,antoniadis2021learning,dinitz2021faster,yu2020power,sun2021pareto,im2021online,bamas2020learning,anand2021regression,li2019online}. 

However, all of this work is limited to \emph{single-agent} decision-making problems. Recently, \citet{BGGJ22} applied this paradigm to design online algorithms for a \emph{multi-agent} resource allocation problem, in which a set of \emph{private} goods (which can only be allocated to and enjoyed by a single agent) need to be divided amongst a group of agents in a \emph{fair} manner. This problem has direct applications to settings like inheritance division and divorce settlement, and has garnered interest from various research communities, dating back to \citet{steihaus1948problem}. Using the \emph{Nash welfare} from bargaining theory~\citep{Nash50} as their notion of fairness, \citet{BGGJ22} show that predictions about agents' total value for all goods can be utilized to achieve significantly improved approximations.  

The solutions proposed by \citet{BGGJ22}, however, do not capture
resource allocation settings involving \emph{public} goods, i.e., goods whose benefit can be enjoyed by multiple agents (e.g., a highway or a park). In many important problems, like participatory budgeting, committee selection, or shared memory allocation, some scarce resources need to be dedicated to make each public good available, and an algorithm needs to decide which goods to invest in, aiming to make the agents happy. Fairness in these settings is often captured by notions like the Nash welfare and the \emph{core}~\cite{foley1970lindahl}. Yet, despite the significance of its applications, only a few papers have successfully studied the fair allocation of public goods~\cite{fain2016core,KFMB17,FGPS19,CFS17,FMS18,PPS20,MSWW} (relative to the extensive literature on private goods; see~\citep{comsocfairdivision}), even fewer have provided \emph{online} algorithms for this problem~\cite{FZC17}, and none utilize machine-learned predictions. 



We address this gap by designing online algorithms for fair allocation to public goods using predictions about how agents value the goods. The key research questions we address are:
\begin{quote}
    \emph{How can we allocate public goods in an online yet fair manner? To what extent can predictions about agent preferences help improve the fairness guarantees?}
\end{quote}

\subsection{Our Results}
We study the design of online algorithms for the fair allocation to public goods arriving over a sequence of $T$ rounds based on the preferences of $N$ agents. In the basic model a new public good in each round $t$ (which we refer to as good $t$),  and the algorithm learns the value $v_{i,t}$ of each agent $i$ for this good. Using this information, the algorithm needs to make an irrevocable decision regarding an amount $x_t \in [0,1]$ to invest in this good. This value $x_t$ can be interpreted as the amount of a scarce resource (e.g., money) devoted toward making this good available, or the probability of implementing the good. Each agent $i$ then receives value $v_{i,t} \cdot x_t$ from this investment, and the total value derived by an agent is the sum of the values gained across rounds. 

Clearly, the algorithm would like to increase the $x_t$'s as much as possible, but the algorithm is limited by a total budget constraint: $\sum_t x_t \le B$, where $B$ is given. Informally, the budget constraint forces the algorithm to invest more in goods that are highly valued by many agents. However, this is challenging since the agents' values for future goods are unknown. To deal with this uncertainty, the algorithm has access to \emph{predictions} regarding the \emph{total value} of each agent $i$: this prediction $\tilde{V}_i$ provides an estimate on the total value $V_i = \sum_t v_{i,t}$ of agent $i$ for all goods to arrive. 

We focus on a quantitative fairness objective, called \emph{proportional fairness} (\Cref{def:pf}), which is stronger than previously considered objectives of the Nash welfare and the core (see \Cref{subsec:prop_fairness}). Lower objective values indicate better fairness, with $1$ indicating perfect proportional fairness. 

In \Cref{sec:binary}, as a warm-up, we consider the setting where agent values are binary (i.e., $v_{i,t}\in \set{0,1}$ for all agents $i$ and goods $t$) and the budget is $B=1$. Binary values correspond to \emph{approval voting} -- where agents either like a good or not; the unit budget forces $\sum_t x_t \le 1$, meaning that $x_t$ can also be viewed as the fraction of an available resource invested in good $t$. For this special case, we show that it is already possible to achieve $O(\log N)$ proportional fairness without using any predictions and this is optimal even if the algorithm had access to \emph{perfect predictions} ($\tilde{V}_i = V_i$ for each agent $i$). 

In \Cref{sec:single_public}, we consider general values and budget, and show that this is no longer true: without access to predictions, no algorithm can achieve $o(T/B)$ proportional fairness even when with $N=1$ agent. Our main positive result shows that, by using the predictions, we can achieve an exponential improvement to $O(\log (T/B))$ proportional fairness for $N$ agents, as long as the predictions are reasonably accurate (constant multiplicative error). We also show this to be optimal given predictions. 

Finally, in \Cref{sec:batched} we extend our model even further by allowing a batch of $L$ public goods to arrive in each round $t$ and achieve $O(\log (\min(N,L) \cdot T/B))$ proportional fairness. In fact, we show that this model strictly generalizes not only our initial public-goods model, but also the private-goods model of \citet{BGGJ22}, and our positive result in this very broad model (almost) implies theirs. 

\subsection{Related Work}

\textbf{Allocation of private goods.} The majority of the prior work on online fair division has focused on \emph{private} goods, for which achieving even basic notions of fairness comes at the cost of extreme inefficiency in the absence of any predictions regarding agents' values for future goods~\citep{BKPP18,ZP20}. \citet{BGGJ22} show that total value predictions can be leveraged to achieve improved fairness guarantees with respect to the Nash social welfare objective. Note that assuming access to predictions of agents' total values for all goods is related to work which assumes that the values of each agent are normalized to add up to $1$~\cite{GPT21,BKM21} or that they are drawn randomly from a normalized distribution~\cite{BMS19}. Our work reinforces the findings of \citet{BGGJ22} that predictions help significantly improve fairness guarantees, but in our more general model with public goods. 

\textbf{Allocation to public goods.} Much of the literature on public goods focuses on the offline setting -- agents have approval preferences, and we have a fixed budget (i.e., the offline version of the setting in \Cref{sec:binary}). This offline version has been studied under various names, such as probabilistic voting~\cite{BMS05a}, fair mixing~\cite{ABM17a}, fair sharing~\cite{Dudd15a}, portioning~\cite{BBPS21a}, and randomized single-winner voting~\cite{ebadian2022optimized}. 
\citet{ebadian2022optimized} show that with access to only \emph{ranked} preferences, $\Theta(\log T)$ is the best possible proportional fairness in the offline setting. Interestingly, we show that incomplete information resulting from online arrivals leads to the same $\Theta(\log T)$ proportional fairness.

Multi-winner voting extends this by selecting a subset of $k$ candidates, which is the offline version of our model in \Cref{sec:single_public} with $B=k$. 
Multi-winner voting can be extended to fair public decision-making~\cite{CFS17} and participatory budgeting~\cite{aziz2021participatory,fain2016core}, which are further generalized by the public good model of \citet{FMS18}. These generalizations allow more complex feasibility restrictions on the allocations than our budget constraint, but work in the offline setting. To the best of our knowledge, the only work to consider \emph{online} allocation to public goods is that of \citet{FZC17}, who consider optimizing the Nash welfare in a model similar to ours. However, they do not provide any approximation guarantees; instead, they study natural online algorithms from an axiomatic viewpoint. 

\textbf{Primal-dual analysis.} Finally, we remark that our main positive result is derived using a primal-dual-style analysis (see the survey of~\citet{buchbinder2009design} for an excellent overview). Almost all of this work deals with additive objective functions. Two notable exceptions to this are the work of~\cite{DJ2012}, who show how to extend these approaches to non-linear functions of additive rewards, as well as~\citet{azar2016allocate}, who consider a variant of the proportional allocation objective, but require additional boundedness conditions on the valuations. \citet{bamas2020primal} show how to adapt primal-dual analyses to incorporate predictions for a number of single-agent problems.  


\section{Model}
\label{sec:model}

We study an online resource allocation problem with $N$ agents and $T$ rounds. Our algorithms observe the number of agents $N$, but they may not know the number of rounds $T$, in advance. We study both \emph{horizon-aware} algorithms which know $T$ and \emph{horizon-independent} algorithms which do not. For simplicity we use $[k]$ to denote the set $\set{1,\ldots,k}$ for a given $k \in \mathbb{N}$.

\emph{Online arrivals.} In the basic version of the model, in each round $t \in [T]$, a public good, which we refer to as good $t$, ``arrives''. Upon its arrival, we learn the value $v_{i,t} \ge 0$ of every agent $i \in [N]$ for it. In \Cref{sec:batched}, we extend the model to allow a batch of $L$ public goods arriving in each round. 

\emph{Online allocations.} When good $t$ arrives, the online algorithm must irrevocably decide the allocation $x_t \in [0,1]$ to good $t$, before the next round.\footnote{In this work we focus on \emph{deterministic} algorithms, but this is w.l.o.g.\ since we consider fractional allocations.} We use $\xv = (x_t)_{t \in [T]}$ to denote the final allocation computed by the online algorithm. In the absence of any further constraints, the decision would be simple: allocate as much as possible to every good by setting $x_t = 1$ for each $t \in [T]$. However, the extent to which the algorithm can allocate these goods is limited by an overall budget constraint: $\sum_{t=1}^T x_t \le B$, where $B \ge 0$ is a fixed budget known to the online algorithm in advance.

\emph{Linear agent utilities.} Choosing to allocate $x_t$ to good $t$ simultaneously yields utility $v_{i,t} \cdot x_t$ to every agent $i\in [N]$. Moreover, we assume that agent utilities are additive across goods, i.e., the final utility of agent $i$ is given by $u_i(\xv) = \sum_{t=1}^T v_{i,t} \cdot x_t$. This class of linear agent utilities is widely studied and it admits several natural interpretations, depending on the application of interest. In applications like budget division, each public good $t$ is a project (e.g., an infrastructure project), and $x_t$ is the amount of a resource (e.g., time or money) invested in the project. In applications such as participatory budgeting or voting, each public good $t$ is an alternative or a candidate, and $x_t$ is the (marginal) probability of it being selected (indeed, one can compute a lottery over subsets of goods of size at most $B$ under which the marginal probability of selecting each good $t$ is precisely $x_t$). 

When working with fractions $x/y$ with $x,y\geq 0$, we adopt the convention that $x/y=1$ when both $x=y=0$, while $x/y=+\infty$ if $y=0$ and $x>0$. We use $H_k = 1+\frac{1}{2}+\frac{1}{3}+\ldots+\frac{1}{k}$ to denote the $k$-th harmonic number. 

\subsection{Proportional Fairness}
\label{subsec:prop_fairness}
We want the allocation $\xv$ computed by our online algorithm to be \emph{fair}. In this work, we use the notion of proportional fairness, which is a quantitative fairness notion that was first proposed in the context of rate control in communication networks~\cite{kelly1998rate}. 

\begin{definition}[Proportional Fairness]\label{def:pf}
For $\alpha \ge 1$, allocation $\xv$ is called $\alpha$-proportionally fair if for every other allocation $\wv$ we have $\frac{1}{N} \sum_{i=1}^N \frac{u_i(\wv)}{u_i(\xv)} \le \alpha$. If $\xv$ is $1$-proportionally fair, we simply refer to it as proportionally fair\footnote{The $1$-proportional fair criterion is more commonly (but equivalently) written as $\frac{1}{N} \sum_{i=1}^N \frac{u_i(\wv)-u_i(\xv)}{u_i(\xv)} \le 0$.}.
We say that an online algorithm is $\alpha$-proportionally fair if it always produces an $\alpha$-proportionally fair allocation. 
\end{definition}

It is known that in the offline setting, where all agent values are known up front, a $1$-proportionally fair allocation $\xv$ always exists, and this is the lowest possible value of proportional fairness~\cite{FMS18}. It is also known that proportional fairness is a strong guarantee that implies several other guarantees sought in the literature. Below, we show two examples: the core and Nash social welfare. 

\emph{Proportional fairness implies the core.} For $\alpha \ge 1$, allocation $\xv$ is said to be in the $\alpha$-core if there is no subset of agents $S$ and allocation $\wv$ such that $\frac{|S|}{N} \cdot u_i(\wv) \ge \alpha \cdot u_i(\xv)$ for all $i \in S$ and at least one of these inequalities is strict. We say that an online algorithm is $\alpha$-core if it always produces an allocation in the $\alpha$-core. The following is a well-known relation between proportional fairness and the core. 
\begin{proposition}
\label{prop:core}
For $\alpha \ge 1$, every $\alpha$-proportionally fair allocation is in the $\alpha$-core.
\end{proposition}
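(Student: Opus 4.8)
The plan is to argue by contradiction, feeding the blocking allocation guaranteed by a core violation directly into the proportional fairness inequality as the competitor $\wv$. Suppose $\xv$ is $\alpha$-proportionally fair but is \emph{not} in the $\alpha$-core. Then there exist a coalition $S$ and an allocation $\wv$ with $\frac{|S|}{N}\, u_i(\wv) \ge \alpha\, u_i(\xv)$ for every $i \in S$, at least one of these being strict, say for agent $j \in S$. The goal is to show that the \emph{same} $\wv$, plugged into \Cref{def:pf}, forces $\frac1N \sum_{i=1}^N \frac{u_i(\wv)}{u_i(\xv)} > \alpha$, which contradicts $\alpha$-proportional fairness.

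The key step is to recast each blocking inequality into the ratio form appearing in \Cref{def:pf}. Granting that $u_i(\xv) > 0$ for all $i$ (see the obstacle below), I would divide the core condition by $u_i(\xv)$ to get $\frac{u_i(\wv)}{u_i(\xv)} \ge \frac{\alpha N}{|S|}$ for all $i \in S$, with strict inequality for $j$. Summing over $S$ gives $\sum_{i \in S} \frac{u_i(\wv)}{u_i(\xv)} > |S| \cdot \frac{\alpha N}{|S|} = \alpha N$. Since each remaining term $\frac{u_i(\wv)}{u_i(\xv)}$ for $i \notin S$ is nonnegative, discarding it does not decrease the full sum, so $\frac1N \sum_{i=1}^N \frac{u_i(\wv)}{u_i(\xv)} \ge \frac1N \sum_{i \in S} \frac{u_i(\wv)}{u_i(\xv)} > \alpha$, the desired contradiction.

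The main obstacle is precisely the division by $u_i(\xv)$, i.e.\ the possibility that some agent gets zero utility under $\xv$, which is what the paper's fraction convention ($0/0 = 1$ and $x/0 = +\infty$ for $x > 0$) is designed to handle. I would dispatch it by first noting that $\alpha$-proportional fairness with finite $\alpha$ already forbids a starved agent: if $u_i(\xv) = 0$ but some allocation grants agent $i$ positive utility, testing that allocation in \Cref{def:pf} produces a $+\infty$ term and breaks the bound. Hence, under the natural assumption that every agent values the goods ($V_i = \sum_t v_{i,t} > 0$), $\alpha$-proportional fairness guarantees $u_i(\xv) > 0$ for all $i$, legitimizing the division above. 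I would flag that a genuinely null agent (one with $V_i = 0$) contributes a harmless $0/0 = 1$ on the fairness side yet only a trivial $0 \ge 0$ constraint on the core side, so such agents must be excluded by this non-null assumption for the clean one-to-one correspondence between the two values of $\alpha$ to hold.
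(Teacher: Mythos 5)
Your proof is correct and follows essentially the same route as the paper's: assume a core violation, divide each blocking inequality by $u_i(\xv)$, sum over the coalition $S$, and extend to all $N$ agents using nonnegativity of the remaining ratios to contradict $\alpha$-proportional fairness. The only substantive difference is your explicit handling of zero utilities, and this is a legitimate refinement rather than pedantry: the paper divides by $u_i(\xv)$ without comment, and under its convention $0/0=1$ the step $\frac{u_i(\wv)}{u_i(\xv)} \ge \alpha\frac{N}{|S|}$ can genuinely fail for a null agent with $u_i(\wv)=u_i(\xv)=0$ included in the blocking coalition (such an agent inflates $|S|$ while contributing only $1$ to the sum), so some positivity or non-null assumption of the kind you state is needed for the claim to hold verbatim.
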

\begin{proof}
If an $\alpha$-proportionally fair allocation $\xv$ is not in the $\alpha$-core, then by definition, there exists a subset of agents $S$ and an allocation $\wv$ such that $\frac{|S|}{N} u_i(\wv) \ge \alpha \cdot u_i(\xv)$ for all $i \in S$ and at least one of these inequalities is strict. Then, $\frac{u_i(\wv)}{u_i(\xv)} \ge \alpha \cdot \frac{N}{|S|}$ for all $i \in S$ and at least one of these inequalities is strict, which implies 
\[
\sum_{i \in S} \frac{u_i(\wv)}{u_i(\xv)} > \alpha \cdot N \Rightarrow \frac{1}{N} \sum_{i=1}^N \frac{u_i(\wv)}{u_i(\xv)} \ge \frac{1}{N} \sum_{i \in S} \frac{u_i(\wv)}{u_i(\xv)} > \alpha,
\]
contradicting the fact that $\xv$ is $\alpha$-proportionally fair. 
\end{proof}

\emph{Proportional fairness implies optimal Nash welfare.} A common objective function studied in multi-agent systems is the Nash social welfare, which aggregates individual agent utilities into a collective measure by taking the geometric mean. That is, the Nash social welfare of allocation $\xv$ is given by $\NSW(\xv) = \left(\prod_{i=1}^N u_i(\xv)\right)^{1/N}$. For $\alpha \ge 1$, we say that allocation $\xv$ achieves an $\alpha$-approximation of the Nash welfare if $\frac{\NSW(\wv)}{\NSW(\xv)} \le \alpha$ for all allocations $\wv$. We say that an online algorithm achieves an $\alpha$-approximation of the Nash welfare if it always produces an allocation that achieves an $\alpha$-approximation of the Nash welfare. It is also well-known that $\alpha$-proportional fairness implies an $\alpha$-approximation of the Nash welfare (in particular, a proportionally fair allocation has the maximum possible Nash welfare). 

\begin{proposition}
\label{prop:nsw}
For $\alpha \ge 1$, if allocation $\xv$ is $\alpha$-proportionally fair, then $\xv$ achieves an $\alpha$-approximation of the Nash welfare. 
\end{proposition}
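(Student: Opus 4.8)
The plan is to reduce the statement to a single application of the AM--GM inequality. Writing out the definition of Nash welfare, the quantity we must bound is
\[
\frac{\NSW(\wv)}{\NSW(\xv)} = \left(\prod_{i=1}^N \frac{u_i(\wv)}{u_i(\xv)}\right)^{1/N},
\]
i.e., the \emph{geometric} mean of the per-agent utility ratios $u_i(\wv)/u_i(\xv)$. The quantity that $\alpha$-proportional fairness controls, on the other hand, is exactly the \emph{arithmetic} mean $\frac{1}{N}\sum_{i=1}^N \frac{u_i(\wv)}{u_i(\xv)}$ of these same ratios. So the whole argument amounts to passing from the geometric mean to the arithmetic mean.

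Concretely, I would fix an arbitrary allocation $\wv$ and chain three steps. First, rewrite $\NSW(\wv)/\NSW(\xv)$ as the geometric mean displayed above. Second, apply AM--GM to the nonnegative quantities $u_i(\wv)/u_i(\xv)$, which gives
\[
\left(\prod_{i=1}^N \frac{u_i(\wv)}{u_i(\xv)}\right)^{1/N} \le \frac{1}{N}\sum_{i=1}^N \frac{u_i(\wv)}{u_i(\xv)}.
\]
Third, invoke $\alpha$-proportional fairness of $\xv$ (\Cref{def:pf}) to bound the right-hand side by $\alpha$. Composing these inequalities yields $\NSW(\wv)/\NSW(\xv) \le \alpha$, and since $\wv$ was arbitrary this is precisely the claimed $\alpha$-approximation of the Nash welfare.

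The only point requiring any care is the treatment of zero utilities, which is governed by the paper's convention that $x/y = +\infty$ when $y=0<x$ and $x/y = 1$ when $x=y=0$. If some $u_i(\xv)=0$ while $u_i(\wv)>0$, the corresponding ratio is $+\infty$; but then the arithmetic mean is already infinite, so $\xv$ could not have been $\alpha$-proportionally fair in the first place, and this case never arises under the hypothesis. In every other case all ratios are finite and AM--GM applies verbatim. I do not anticipate a genuine obstacle: the entire content of the proposition lies in recognizing that proportional fairness bounds the arithmetic mean of the ratios while Nash welfare is governed by their geometric mean, after which AM--GM does all the work.
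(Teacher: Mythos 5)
Your proposal is correct and follows exactly the paper's own proof: rewrite $\NSW(\wv)/\NSW(\xv)$ as the geometric mean of the per-agent ratios, apply AM--GM to pass to the arithmetic mean, and invoke $\alpha$-proportional fairness. Your extra remark on zero utilities (handled by the paper's conventions for $x/y$) is a harmless addition that the paper leaves implicit.
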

\begin{proof}
Take any allocation $\wv$. The result follows by observing that
\[
\frac{\NSW(\wv)}{\NSW(\xv)} = \left( \prod_{i=1}^N \frac{u_i(\wv)}{u_i(\xv)} \right)^{1/N} \le \frac{1}{N} \sum_{i=1}^N \frac{u_i(\wv)}{u_i(\xv)} \le \alpha,
\]
where the second transition is the AM-GM inequality.  
\end{proof}

We remark that the upper bounds derived in this work hold for the stronger notion of proportional fairness, while the lower bounds hold even for the weaker notion of Nash welfare approximation.

\subsection{Set-Aside Greedy Algorithms}\label{subsec:setaside}

In order to compute (approximately) proportionally fair allocations, we consider a family of online algorithms, called \emph{Set-Aside Greedy Algorithms}.  Recent work~\citep{BGGJ22,BKM21} has demonstrated how such algorithms can be used to get strong performance guarantees for online allocation of \emph{private goods}; we show that with non-trivial modifications, they can also achieve compelling fairness guarantees for allocating public goods. 

At a high level, an algorithm in this family divides the overall budget $B$ into two equal portions. 
\begin{enumerate}
    \item The first half, called the \emph{set-aside budget}, is used to allocate $y_t \in [0,1]$ to each good $t$ in such a manner that $\sum_{t=1}^T y_t \le B/2$ and this portion of the allocation guarantees each agent $i$ a certain minimum utility of $\Delta_i$ (i.e., $\sum_{t=1}^T v_{i,t} \cdot y_t \ge \Delta_i$). For example, if $y_t = B/(2T)$ for each $t \in [T]$, then we can use $\Delta_i = \frac{B}{2T} \cdot \sum_{t=1}^T v_{i,t}$. This ensures that in the proportional fairness definition (\Cref{def:pf}), the ratio $\frac{u_i(\wv)}{u_i(\xv)}$ is not excessively large for any agent $i$. 
    \item The second half, called the \emph{greedy budget}, is used to allocate $z_t \in [0,1-y_t]$ to each good $t$ in such a manner that $\sum_{t=1}^T z_t \le B/2$. This portion of the budget is used in a adaptive greedy-like fashion toward online optimization of the desired objective. 
\end{enumerate}

We refer to $y_t$ and $z_t$ as \emph{semi-allocations} to good $t$, and the final allocation to good $t$ is determined by combining these two semi-allocations, i.e., $x_t = y_t+z_t$. An important quantity in both our algorithm design and its analysis is the promised utility to an agent, defined next. 

\begin{definition}[Promised Utility]
The semi-allocations $y_1,\ldots,y_T$ guarantee that by the end of the algorithm each agent $i$ will receive a utility of $\sum_{t=1}^T v_{i,t} \cdot y_t \geq \Delta_i$ from the set-aside portion of the budget. By round $t$ the algorithm has already set semi-allocations $z_1,\ldots,z_{t-1}$ using the greedy portion of the budget, and needs to now decide $z_t$. At this stage, as a function of $z_t$, the algorithm can guarantee that each agent $i$ will eventually receive utility at least 
$\tilde{u}_{i,t}(z_t) =\Delta_i + {\textstyle\sum_{\tau=1}^t} v_{i,\tau} \cdot z_{\tau}$, even if they do not benefit from any more of the greedy budget. We refer to this as the \emph{promised utility}.
\end{definition}



\section{Warm Up: Binary Utilities and Unit Budget}\label{sec:binary}

Before presenting our main results, we first build some intuition regarding our online setting and the proportional fairness objective by considering the interesting special case where agents have binary utilities for goods (i.e., $v_{i,t} \in \set{0,1}$ for each $i,t$) and the total budget is $B=1$. In this setting, which is motivated by approval voting, we say that agent $i$ ``likes'' good $t$ if $v_{i,t} = 1$, and does not like good $t$ otherwise. Note that with $B=1$, the budget constraint is $\sum_{t=1}^T x_t \le 1$, which means $x_t$ can be interpreted as the \emph{fraction} of an available resource (e.g., time or money) that is dedicated to good $t$. 

On the negative side, we show that no online algorithm can achieve $o(\log N)$-proportional fairness, or even the weaker guarantee of $o(\log N)$-approximation of the Nash welfare. On the positive side we provide a set-aside greedy algorithm that achieves $O(\log N )$ proportional fairness (and
therefore, $O(\log N )$-NSW optimality), thus establishing $\Theta(\log N )$ as the best possible approximation
in this restricted scenario. 

First, in the trivial case with a single agent ($N=1$), we can simply set $x_t = 1$ when the first good $t$ liked by the agent arrives,\footnote{If the agent does not like any good, (exact) proportional fairness is trivially achieved.} which easily yields (exact) proportional fairness. 

It is tempting to extend this idea to the case of $N > 1$ agents. However, we find that even in this restricted scenario with binary utilities and unit budget, no online algorithm achieves $o(\log N)$-proportional fairness, or even the weaker guarantee of $o(\log N)$-approximation of the Nash welfare. In fact, this remains true even if the algorithm is horizon-aware (i.e., knows $T$ in advance) and knows precisely how many goods each agent will like in total. Intuitively, this is because we show that no online algorithm can sufficiently distinguish between instances where many agents like the same goods and those where agents like mostly disjoint goods. 

\begin{theorem}\label{thm:lower-bound-binary}
With binary agent utilities ($v_{i,t} \in \set{0,1}, \forall i,t$) and unit budget ($B=1$), every online algorithm is $\Omega(\log N)$-proportionally fair (in fact, achieves $\Omega(\log N)$-approximation of the Nash welfare), even if the algorithm is horizon-aware and knows in advance the total number of goods each agent will like. 
\end{theorem}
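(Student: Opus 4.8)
The plan is to prove the stronger Nash-welfare statement, which by \Cref{prop:nsw} immediately yields the proportional-fairness bound: it suffices to exhibit, for every online algorithm, an instance on which $\NSW(\OPT)/\NSW(\ALG)=\Omega(\log N)$. The source of hardness should be the algorithm's inability to tell, when it commits budget to a good liked by a set $S$, whether $S$ will later be absorbed into a larger liked set (so it should have waited and funded the larger, cheaper-to-cover good) or not (so it must fund $S$ now). I would encode this with a dyadic family on $N=2^k$ agents identified with the leaves of a complete binary tree: for each level $j\in\set{0,1,\dots,k}$ present the $2^{k-j}$ goods corresponding to the depth-$j$ dyadic blocks, each liked by the $2^{j}$ agents in its block, revealing goods in order of increasing $j$ (singletons first, the all-liked good last). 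The family $\set{I_\ell}_{\ell=0}^{k}$ is then defined by letting the real goods of $I_\ell$ be exactly levels $0,\dots,\ell$.

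The key design obstacle, and the step I expect to be most delicate, is that the bound must hold against a horizon-aware algorithm that is additionally told each agent's total number of liked goods; so the $I_\ell$ must be made genuinely indistinguishable \emph{a priori}, or else the algorithm just reads off $\ell$ and plays the offline optimum. I would fix this by padding: after its real goods, $I_\ell$ receives $k-\ell$ personal (singleton) goods per agent, making every agent like exactly $k+1$ goods in every instance, and then dummy goods liked by no one bring the count to a common $T^\star=\max_\ell(\cdots)$. Since $T$ and all per-agent like-counts are now identical across the family, this side information is vacuous, while the prefix through level $j$ is literally identical in every $I_\ell$ with $\ell\ge j$. Hence a deterministic algorithm (w.l.o.g.\ by the footnote on fractional allocations) makes the \emph{same} total investment $C_j$ in the level-$j$ goods across all such instances, with $\sum_{j=0}^{k}C_j\le 1$ by the budget constraint in $I_k$. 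A secondary point to verify is that padding cannot help $\OPT$: since every good liked by a depth-$\ell$ block is \emph{contained} in that block, a partition argument gives $\NSW(\OPT(I_\ell))=2^\ell/N$ (fund the $2^{k-\ell}$ top blocks equally), with singletons strictly less budget-efficient.

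With the $C_j$ pinned down, the endgame is a short calculation. On $I_\ell$, bounding $\NSW$ by the arithmetic mean (AM-GM) and charging leftover budget to the padding gives
\[
\NSW(\ALG(I_\ell))\;\le\;\tfrac1N\textstyle\sum_i u_i\;\le\;\tfrac1N\Big(1+\sum_{j=0}^{\ell}(2^j-1)\,C_j\Big),
\]
so $\NSW(\OPT(I_\ell))/\NSW(\ALG(I_\ell))\ge 2^\ell/\big(1+\sum_{j\le\ell}(2^j-1)C_j\big)$. Assuming this ratio is at most $\rho$ for \emph{every} $\ell$ forces $D_\ell:=\sum_{j\le\ell}(2^j-1)C_j\ge 2^\ell/\rho-1$. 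I would then combine these constraints with weights $2^{-\ell}$: summing $\sum_{\ell=1}^{k}2^{-\ell}D_\ell$ and swapping the order of summation bounds the left-hand side by $2\sum_j C_j\le 2$, while the right-hand side is at least $k/\rho-1$, yielding $\rho\ge k/3=\Omega(\log N)$. Intuitively this is a harmonic obstruction: matching $D_\ell\approx 2^\ell/\rho$ at every scale costs a constant fraction of budget per level, so a unit budget can only hedge $O(\rho)$ of the $k$ levels. The real work is thus not this computation but making the construction's three competing requirements coexist: online hardness, uninformative $T$ and like-counts, and the absence of an early cheap cover that a clairvoyant algorithm could also exploit to match $\OPT$; the dyadic-plus-padding family is engineered precisely to reconcile them.
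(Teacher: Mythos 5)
Your proposal is correct, and it follows the same high-level strategy as the paper's proof: a nested family of instances made indistinguishable by padding (so that the horizon and per-agent like-counts carry no information), the observation that a deterministic algorithm must commit the same prefix investments across all instances sharing a prefix, an AM-GM upper bound on the algorithm's Nash welfare, an optimal allocation concentrated on the last ``cohesive'' block, and finally an averaging argument over the instance family. Where you genuinely diverge is in the two key ingredients. First, the instance family: the paper uses cyclic blocks $S_r$ in which every good in block $\ell$ is liked by $\ell+1$ agents (arithmetic growth of cohesion, presented in increasing order, with all-zero blocks and singleton blocks as padding), whereas you use dyadic blocks on $N=2^k$ agents with level-$j$ goods liked by $2^j$ agents (geometric growth). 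Second, the concluding lemma: the paper proves a standalone technical lemma (its \Cref{lem:lower-bounds}) via a contradiction argument with weights $\tfrac{1}{\ell(\ell+1)}$ and telescoping harmonic sums, yielding the bound $H_N/2$; your geometric cohesion lets you replace this with a shorter direct computation --- weight the constraints $D_\ell \ge 2^\ell/\rho - 1$ by $2^{-\ell}$, swap the order of summation, and use $\sum_j C_j \le 1$ --- yielding $\rho \ge k/3$. Your endgame is more elementary (no separate lemma needed) at the cost of a slightly worse constant ($\tfrac{1}{3}\log_2 N$ versus $H_N/2$) and the inessential restriction to $N$ a power of two; the paper's arithmetic construction and harmonic lemma buy the sharper constant. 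One small remark: your claim that $\NSW(\OPT(I_\ell))$ equals exactly $2^\ell/N$ is unnecessary --- the argument only needs the lower bound $\NSW(\OPT(I_\ell)) \ge 2^\ell/N$ obtained by funding the level-$\ell$ blocks, so the ``padding cannot help OPT'' verification can be dropped entirely.
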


Before we prove the theorem, we need the following technical lemma. Recall that for $k \in \bbN$, $H_k$ is the $k$-th harmonic number. 
\begin{lemma}\label{lem:lower-bounds}
For every $S \in \bbN$, $W \ge 1$, and $\yv=(y_{\ell})_{\ell \in [S]} \in [0,1]^{[S]}$ with $\sum_{\ell=1}^S y_{\ell} \leq W$, we have
\begin{align*}
\max_{\ell \in [S]} \frac{\ell+1}{W+\sum_{j=1}^{\ell} j \cdot y_j} \ge \frac{H_{S}}{2W}.
\end{align*}
\end{lemma}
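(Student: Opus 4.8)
The plan is to reparametrize everything through the prefix sums $P_\ell := W + \sum_{j=1}^{\ell} j\, y_j$, so that $P_0 = W$, $\ell\, y_\ell = P_\ell - P_{\ell-1}$, and the quantity we are maximizing is simply $(\ell+1)/P_\ell$. Writing $R := \max_{\ell \in [S]} (\ell+1)/P_\ell$ for the target, each $P_\ell$ then satisfies $P_\ell \ge (\ell+1)/R$. The whole task reduces to feeding these lower bounds into the budget constraint $\sum_{\ell=1}^S y_\ell \le W$ in a way that reconstitutes the harmonic number $H_S$ on the correct scale.

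The key step is a summation-by-parts (Abel) manipulation of the constraint. Substituting $y_\ell = (P_\ell - P_{\ell-1})/\ell$ and telescoping $\sum_{\ell=1}^S \frac{P_\ell - P_{\ell-1}}{\ell}$ (using $P_0 = W$), I would rewrite the budget constraint $\sum_{\ell=1}^S y_\ell \le W$ in the equivalent form
\[
\frac{P_S}{S} + \sum_{\ell=1}^{S-1} \frac{P_\ell}{\ell(\ell+1)} \le 2W.
\]
This is the crux of the argument: the weights $\frac{1}{\ell(\ell+1)}$ are exactly what will cancel against the numerators $\ell+1$ to produce a harmonic sum.

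With this reformulation in hand, the rest is a direct substitution. Using $P_\ell \ge (\ell+1)/R$ for every $\ell$, the tail term contributes $\frac{P_S}{S}\ge \frac{S+1}{SR}$ and each summand contributes $\frac{P_\ell}{\ell(\ell+1)} \ge \frac{1}{R\ell}$, so the left-hand side is at least $\frac{1}{R}\bigl(\frac{S+1}{S} + H_{S-1}\bigr) = \frac{1+H_S}{R}$, where the last equality uses $H_{S-1} = H_S - 1/S$. Combining with the constraint yields $\frac{1+H_S}{R} \le 2W$, hence $R \ge \frac{1+H_S}{2W} \ge \frac{H_S}{2W}$, which is the claim (in fact marginally stronger). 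I would dispatch the degenerate case $S=1$ separately, where the sum is empty and the bound reduces to $\frac{2}{W+y_1}\ge \frac{1}{2W}$, following from $y_1 \le W$.

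The main obstacle I anticipate is discovering the Abel-summation reformulation: before it, the constraint $\sum_\ell y_\ell \le W$ and the objective $(\ell+1)/P_\ell$ are phrased in incompatible terms and it is unclear how $H_S$ could possibly emerge. Once the constraint is rewritten as $\frac{P_S}{S} + \sum_{\ell} \frac{P_\ell}{\ell(\ell+1)} \le 2W$, the harmonic number appears automatically and the inequality essentially proves itself.
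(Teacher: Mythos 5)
Your proof is correct, and it even yields a marginally stronger bound of $\frac{1+H_S}{2W}$. The underlying engine is the same as the paper's --- both arguments hinge on the weights $\frac{1}{\ell(\ell+1)}$ and telescoping against the budget $\sum_{\ell} y_\ell \le W$ --- but the packaging is genuinely different. The paper argues by contradiction: assuming every ratio is below $\frac{H_S}{2W}$, it rewrites this as $\sum_{j=1}^\ell j\,y_j > \frac{2W(\ell+1)}{H_S} - W$ for all $\ell$, divides by $\ell(\ell+1)$, sums over $\ell$, and exchanges the order of summation to show the left side is at most $W$ while the right side is at least $W$. You instead work directly: the prefix-sum reparametrization plus Abel summation rewrites the budget constraint exactly as $\frac{P_S}{S} + \sum_{\ell=1}^{S-1}\frac{P_\ell}{\ell(\ell+1)} \le 2W$ (this identity is the dual manipulation to the paper's exchange of summation order), after which substituting $P_\ell \ge (\ell+1)/R$ finishes in one line. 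Your route buys two things: it dispenses with the contradiction scaffolding, and it is exact --- where the paper discards the $\frac{1}{S+1}$ in $\sum_{\ell=j}^S \frac{1}{\ell(\ell+1)} = \frac{1}{j} - \frac{1}{S+1}$, your boundary term $\frac{P_S}{S}$ retains that mass, which is precisely the source of the extra $+1$ in $\frac{1+H_S}{2W}$. (Minor note: your separate treatment of $S=1$ is unnecessary, since the general argument covers it under the empty-sum convention, but it is harmless.)
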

\begin{proof}
Suppose for contradiction that there exist $S \in \bbN$, $W \ge 1$, and $\yv=(y_{\ell})_{\ell\in [S]} \in [0,1]^{[S]}$ such that $\sum_{\ell=1}^S y_{\ell} \leq W $ and, for every $\ell \in [S]$, 
\begin{align*}
     \frac{\ell+1}{W+\sum_{j=1}^\ell j \cdot y_j} < \frac{H_{S}}{2W} \Rightarrow {\textstyle\sum_{j=1}^\ell} j \cdot y_j > \frac{2W \cdot (\ell+1)}{H_S}-W.
\end{align*}

Dividing the above equation by $\ell \cdot (\ell+1)$ and summing over $\ell \in [S]$, we have that
\begin{equation}\label{eqn:lem-lower-bound}
\sum_{\ell=1}^S \sum_{j=1}^\ell \frac{j \cdot y_j}{\ell \cdot (\ell+1)} > \sum_{\ell=1}^S \left( \frac{2W}{H_S \cdot \ell} - \frac{W}{\ell \cdot (\ell+1)} \right).
\end{equation}

Let us analyze the LHS in \Cref{eqn:lem-lower-bound} by exchanging the order of summations. We have
\begin{align*}
    \text{LHS} &= \sum_{j=1}^S j \cdot y_j \cdot \left(\sum_{\ell=j}^S \frac{1}{\ell \cdot (\ell+1)} \right)
    = \sum_{j=1}^S j \cdot y_j \cdot \left(\sum_{\ell=j}^S \frac{1}{\ell} - \frac{1}{\ell+1} \right)\\
    &= \sum_{j=1}^S j \cdot y_j \cdot \left(\frac{1}{j} - \frac{1}{S+1}\right)
    \le \sum_{j=1}^S y_j \le W,
\end{align*}
where the third transition holds due to the telescopic sum. 

Now, let us analyze the RHS in \Cref{eqn:lem-lower-bound} using the same telescopic sum. 
\begin{align*}
    \text{RHS} = \frac{2W}{H_S} \sum_{\ell=1}^S \frac{1}{\ell} - W \cdot \sum_{\ell=1}^S \left(\frac{1}{\ell}-\frac{1}{\ell+1}\right)
    = \frac{2W}{H_S} \cdot H_S - W \cdot \left(1-\frac{1}{S+1}\right)
    \ge W.
\end{align*}

Hence, we proved that in \Cref{eqn:lem-lower-bound}, the LHS is at most $W$ and the RHS is at least $W$. But the equation proves the LHS to be strictly greater than the RHS, which is the desired contradiction.  
%
%
\end{proof}

\begin{figure}[t]
    \centering
\begin{tabular}{|c| c c c c| c| c c c c| c c c c|  c c c c| }
\hline
\diagbox[width=3em]{\textbf{$\bv_t$}}{$t$}&
 $1$ & $2$ & $3$ & $4$ & $5-8$  & $9$ & $10$ & $11$ & $12$& \begin{tabular}[x]{@{}c@{}}  $13$ \\ $17$ \end{tabular} &\begin{tabular}[x]{@{}c@{}}  $14$ \\ $18$ \end{tabular}  & \begin{tabular}[x]{@{}c@{}}  $15$ \\ $19$ \end{tabular} & \begin{tabular}[x]{@{}c@{}}  $16$ \\ $20$ \end{tabular} 
 & \begin{tabular}[x]{@{}c@{}}  $21$ \\ $25$ \\ $29$\\$33$ \end{tabular}&   \begin{tabular}[x]{@{}c@{}}  $22$ \\ $26$ \\ $30$\\$34$ \end{tabular} &  \begin{tabular}[x]{@{}c@{}}  $23$ \\ $27$ \\ $31$\\$35$ \end{tabular} &  \begin{tabular}[x]{@{}c@{}}  $24$ \\ $28$ \\ $32$\\$36$ \end{tabular}  \\
  \hline
 $v_{1,t}$ & $1$ & $0$ & $0$ & $1$ & $0$ & $1$ & $0$ & $0$ & $0$& $1$ & $0$ & $0$ & $0$& $1$ & $0$ & $0$ & $0$ \\
 $v_{2,t}$ & $1$ & $1$ & $0$ & $0$ & $0$ & $0$ & $1$ & $0$ & $0$& $0$ & $1$ & $0$ & $0$ & $0$ & $1$ & $0$ & $0$  \\
  $v_{3,t}$ & $0$ & $1$ & $1$  & $0$ & $0$ & $0$ & $0$ & $1$ & $0$& $0$ & $0$ & $1$ & $0$& $0$ & $0$ & $1$ & $0$ \\
  $v_{4,t}$ & $0$ & $0$ & $1$  & $1$ & $0$ & $0$ & $0$ & $0$ & $1$& $0$ & $0$ & $0$ & $1$&$0$ & $0$ & $0$ & $1$\\
  \hline
  \multicolumn{1}{c}{} 
  & \multicolumn{4}{c}{\upbracefill}
  & \multicolumn{1}{c}{\upbracefill}
  & \multicolumn{4}{c}{\upbracefill}
  & \multicolumn{4}{c}{\upbracefill}
  &\multicolumn{4}{c}{\upbracefill}
   \\[-1ex]
  \multicolumn{1}{c}{} 
  & \multicolumn{4}{c}{$\scriptstyle S_2$}
  & \multicolumn{1}{c}{$\scriptstyle S_0$}
  & \multicolumn{4}{c}{$\scriptstyle  S_1$}
  & \multicolumn{4}{c}{$\scriptstyle 2\times S_1$}
  & \multicolumn{4}{c}{$\scriptstyle 4\times S_1$}\\
  \multicolumn{1}{c}{} 
  & \multicolumn{5}{c}{\upbracefill}&
   \multicolumn{8}{c}{\upbracefill}&
   \multicolumn{4}{c}{\upbracefill}\\[-1ex]
  \multicolumn{1}{c}{}
  & \multicolumn{5}{c}{$\scriptstyle L_1$}
  & \multicolumn{8}{c}{$\scriptstyle L'_2$}& \multicolumn{4}{c}{$\scriptstyle L'_3$}\\
\end{tabular}
\vspace{1cm}

\begin{tabular}{|c| c c c c| c| c c c c| c c c c|  c | c | }
\hline
\diagbox[width=3em]{\textbf{$\bv_t$}}{$t$}&
 $1$ & $2$ & $3$ & $4$ & $5-8$  & $9$ & $10$ & $11$ & $12$& \begin{tabular}[x]{@{}c@{}}  $13$ \\ $17$ \end{tabular} &\begin{tabular}[x]{@{}c@{}}  $14$ \\ $18$ \end{tabular}  & \begin{tabular}[x]{@{}c@{}}  $15$ \\ $19$ \end{tabular} & \begin{tabular}[x]{@{}c@{}}  $16$ \\ $20$ \end{tabular} 
 & $21-24$&  $25-36$  \\
  \hline
 $v_{1,t}$ & $1$ & $0$ & $0$ & $1$ & $0$ & $1$ & $0$ & $1$ & $1$& $0$ & $0$ & $0$ & $0$& $1$ & $0$  \\
 $v_{2,t}$ & $1$ & $1$ & $0$ & $0$ & $0$ & $1$ & $1$ & $0$ & $1$& $0$ & $0$ & $0$ & $0$ & $1$ & $0$   \\
  $v_{3,t}$ & $0$ & $1$ & $1$  & $0$ & $0$ & $1$ & $1$ & $1$ & $0$& $0$ & $0$ & $0$ & $0$& $1$ & $0$  \\
  $v_{4,t}$ & $0$ & $0$ & $1$  & $1$ & $0$ & $0$ & $1$ & $1$ & $1$& $0$ & $0$ & $0$ & $0$ &$1$ & $0$ \\
  \hline
  \multicolumn{1}{c}{} 
  & \multicolumn{4}{c}{\upbracefill}
  & \multicolumn{1}{c}{\upbracefill}
  & \multicolumn{4}{c}{\upbracefill}
  & \multicolumn{4}{c}{\upbracefill}
  &\multicolumn{1}{c}{\upbracefill}
   &\multicolumn{1}{c}{\upbracefill}
   \\[-1ex]
  \multicolumn{1}{c}{} 
  & \multicolumn{4}{c}{$\scriptstyle S_2$}
  & \multicolumn{1}{c}{$\scriptstyle S_0$}
  & \multicolumn{4}{c}{$\scriptstyle  S_3$}
  & \multicolumn{4}{c}{$\scriptstyle 2\times S_1$}
  & \multicolumn{1}{c}{$\scriptstyle S_4$}
    & \multicolumn{1}{c}{$\scriptstyle 3\times S_0$}\\
  \multicolumn{1}{c}{} 
  & \multicolumn{5}{c}{\upbracefill}&
   \multicolumn{8}{c}{\upbracefill}&
   \multicolumn{2}{c}{\upbracefill}\\[-1ex]
  \multicolumn{1}{c}{}
  & \multicolumn{5}{c}{$\scriptstyle L_1$}
  & \multicolumn{8}{c}{$\scriptstyle L_2$}
  & \multicolumn{2}{c}{$\scriptstyle L_3$}\\
\end{tabular}
    \caption{Instances $I_1$ (top) and $I_3$ (bottom) when $N=4$.}
    \label{fig:Instances_binary}
\end{figure}

Now, we are ready to prove \Cref{thm:lower-bound-binary}.
\begin{proof}[Proof of \Cref{thm:lower-bound-binary}]
We prove that the Nash social welfare approximation of every online algorithm is at least $H_N/2 = \Omega(\log N)$. Suppose for contradiction that there is an online algorithm with a smaller approximation ratio. 

Set $T= N \cdot (N^2+N-2)/2$. We use $\bv_{t}=(v_{i,t})_{i\in [N]}$ to denote the vector of values of all the agents for good $t \in [T]$. For $r \in [N] \cup \set{0}$, we denote with $S_r$ a sequence of $N$ goods that arrive consecutively for which the agents have the following valuations. For the first good in the sequence, $v_{i,1} = 1$ for $i \le r$ and $v_{i,1} = 0$ for $i \ge r+1$. For $t \in \set{2,\ldots,N}$, $\bv_t$ is obtained by cyclically permuting $\bv_{t-1}$ to the right. For example, if $N=3$, then $S_2$ consists of $3$ goods that arrive consecutively with $\bv_1=(1,1,0)$, $\bv_2=(0,1,1)$, and $\bv_3=(1,0,1)$. Notice that $S_r$ consists of $N$ goods such that each good is liked by $r$ agents and each agent likes $r$ goods. Also, notice that $S_0$ consists of $N$ goods for which all agents have zero value. 

Next, we use $S_r$ to construct two building blocks of our adversarial instances.
\begin{itemize}
    \item For $r \in [N-1]$, let $L_r$ be a sequence of consecutive goods constructed by concatenating $S_{r+1}$ followed by $r$ copies of $S_0$. 
    \item For $r \in [N-1]$, let $L'_r$ be a sequence of consecutive goods constructed by concatenating $r+1$ copies of $S_1$.
\end{itemize}

Notice that for each $r\in [N-1]$, $L_r$ and $L'_r$ both consist of $(r+1) \cdot N$ goods of which each agent likes exactly $r+1$ goods. 

Finally, for each $k \in [N-1]$, define instance  $I_k=(L_1,L_2,\ldots,L_k,L'_{k+1},\ldots,L'_{N-1})$, i.e., instance $I_k$ is constructed by concatenating $L_1, L_2, \ldots, L_k, L'_{k+1}, \ldots, L'_{N-1}$ in that order. \Cref{fig:Instances_binary} shows instances $I_1$ and $I_3$ with $N=4$. Notice that for each $k \in [N-1]$, instance $I_k$ consists of a total of $T$ goods of which each agent likes $(N^2+N-2)/2$ goods. We let the algorithm know in advance that it will see instance $I_k$ for some $k \in [N-1]$, and prove that it still cannot achieve proportional fairness better than $H_N/2$. 

With slight abuse of notation, for $r \in [N-1]$, let $L_r$ also denote the set of goods that appear in the sequence $L_r$. Let $\xv$ denote the allocation produced by the online algorithm on instance $I_{N-1}$, i.e., when the algorithm observes $L_1,L_2,\ldots,L_{N-1}$. For $k \in [N-1]$, let $x_{L_k} = \sum_{t \in L_k} x_t$ denote the total allocation to goods in $L_k$. 

Now, fix $k \in [N-1]$ and let $\xv^k$ denote the allocation produced by the algorithm on instance $I_k$ with values $v_{i,t}$. Because the algorithm cannot distinguish between $I_k$ and $I_{N-1}$ for the first $k$ blocks (namely, $L_1,\ldots,L_k$), even under instance $I_k$, the algorithm must assign a total allocation of $x_{L_\ell}$ to block $L_{\ell}$ for each $\ell \le k$. Hence, we have 
\allowdisplaybreaks
\begin{align*}
  \NSW(\xv^k) &=  \left( \prod_{i=1}^N \left(
    \sum_{\ell=1}^k \sum_{ t \in L_{\ell}} v_{i,t} \cdot x^k_t   +  \sum_{\ell=k+1}^{N-1}  \sum_{ t \in L'_{\ell}} v_{i,t} \cdot x^k_t
    \right) \right)^{1/N} \\
&    \leq \frac{1}{N}\cdot \left(   \sum_{i=1}^N  \left( \sum_{\ell=1}^k \sum_{ t \in L_{\ell}} v_{i,t} \cdot x^k_t   +  \sum_{\ell=k+1}^{N-1}  \sum_{ t \in L'_{\ell}} v_{i,t} \cdot x^k_t 
    \right) \right)\\
 &   = \frac{1}{N}\cdot \left(     \sum_{\ell=1}^k \sum_{ t \in L_{\ell}} \sum_{i=1}^N  v_{i,t} \cdot x^k_t   +  \sum_{\ell=k+1}^{N-1}  \sum_{ t \in L'_{\ell}} \sum_{i=1}^N  v_{i,t} \cdot x^k_t 
    \right) \\
     &\leq \frac{1}{N}\cdot \left(     \sum_{\ell=1}^k \sum_{ t \in L_{\ell}} (\ell+1)  \cdot x^k_t   +  \sum_{\ell=k+1}^{N-1}  \sum_{ j \in L'_{\ell}} x^k_t
    \right) \\
     & \leq  \frac{1}{N}\cdot \left(     \sum_{\ell=1}^k  (\ell+1)  \cdot x_{L_\ell}   +  1-\sum_{\ell=1}^k x_{L_\ell}    \right) 
\end{align*}
where the second transition follows from the AM-GM inequality and the fourth transition follows because, for each $\ell \in [N-1]$, each good $t \in L_{\ell}$ is liked by at most $\ell+1$ agents 
and each good $t \in L'_{\ell}$ is liked by a single agent. 

Consider an alternative allocation $\yv^k$ which allocates $\nicefrac{1}{N}$ to each of the first $N$ goods of $L_k$ (i.e., goods of its $S_{k+1}$ portion). This provides each agent utility equal to $(k+1)/N$, thus achieving Nash social welfare equal to $(k+1)/N$. 

Hence, the algorithm's approximation ratio $\alpha_k$ on instance $I_k$ satisfies
\begin{align*}
  \alpha_k & \geq \frac{k+1}{\sum_{\ell=1}^k  (\ell+1)  \cdot x_{L_\ell}   +  1-\sum_{\ell=1}^k x_{L_\ell}}
 \ge \frac{k+1}{1+\sum_{\ell=1}^k  (\ell+1)  \cdot x_{L_\ell}}.
\end{align*}

Hence, the worst-case approximation ratio is at least $\max_{k \in [N-1]} \alpha_k$. Applying \Cref{lem:lower-bounds} with $S=N$ and $W=1$, we get that this is at least $H_N/2$, as desired.\footnote{We can set $y_1 = 0$ and $y_{\ell} = x_{L_{\ell-1}}$ for all $\ell \in \set{2,\ldots,N}$ when applying the lemma.} 
\end{proof} 

Next, we provide a set-aside greedy algorithm that achieves $O(\log N)$ proportional fairness (and therefore, $O(\log N)$-NSW optimality), thus establishing $\Theta(\log N)$ as the best possible approximation in this restricted scenario. We remark that this restricted case of binary utilities and unit budget already poses an interesting challenge by preventing constant approximation, but $O(\log N)$ approximation is still quite reasonable as it does not depend on the horizon $T$ (which can typically be very large) and in practice the number of agents $N$ is reasonably small. We also remark that we achieve the $O(\log N)$ upper bound using a horizon-independent algorithm, while the lower bound of \Cref{thm:lower-bound-binary} holds even when the algorithm is horizon-aware. 

The algorithm (\Cref{alg:Binary}), works as follows: it uses the set-aside portion of the budget to set $y_t = 1/(2N)$ whenever good $t$ is the first liked good of at least one agent (note that $\sum_t y_t \le 1/2$). This ensures that each agent $i$ gets utility at least $\Delta_i = 1/(2N)$.\footnote{Agents who do not like any good can only improve the approximation.} Based on this, the algorithm uses the following expression of promised utility to agent $i$ in round $t$: 
$
\tilde{u}_{i,t}(z_t) =\frac{1}{2N} + {\textstyle\sum_{\tau=1}^t} v_{i,\tau} \cdot z_{\tau}.
$

\begin{algorithm}[!ht]
\caption{\algoname\ Algorithm for Binary Values and Unit Budget}
\DontPrintSemicolon
\KwIn{Target proportional-fairness level $\alpha$}
\begin{algorithmic}[1]
\label{alg:Binary}
\FORALL{$t=1$ to $T$}
\STATE (Set-aside semi-allocation) If there exists $i \in [N] $ with $v_{i,t} = 1$ and $v_{i,\tau} = 0$ for each $\tau \in [t-1]$, then set $y_t=1/(2N)$, else set $y_t = 0$.
\STATE (Greedy semi-allocation) Compute ${z}_t = \min\set{z_t :\frac{1}{N} \sum_{i=1}^N \frac{v_{it}}{\tilde{u}_{it}(z_t)} \leq \alpha, \; z_t \geq 0}$. 
\STATE Allocate ${x}_t = {y}_t+{z}_t$ to good $t$.
\ENDFOR
\end{algorithmic}
\end{algorithm}

The algorithm chooses $z_t$ in a greedy manner (i.e., smallest possible) such that, for each agent $i$, the ratio of her value $v_{i,t}$ for good $t$ to her promised utility $\tilde{u}_{i,t}(z_t)$ is at most a target quantity. We defer the proof of this result to the appendix because we will present this technique in much more generality in \cref{sec:batched_alg} (the only adjustment required in the proof of the next result is the slightly different expression of $\Delta_i = 1/(2N)$ specific to this case of binary utilities and unit budget).

\begin{theorem}\label{thm:binarypf}
\Cref{alg:Binary} with $\alpha\geq 2\ln(2N)$ realizes an $\alpha$-proportional fair allocation. 
\end{theorem}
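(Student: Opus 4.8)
The statement bundles two claims: that \Cref{alg:Binary} outputs a \emph{valid} allocation (i.e.\ $x_t\in[0,1]$ and $\sum_t x_t\le B=1$), and that this allocation is $\alpha$-proportionally fair. I would prove the fairness bound first, as it follows cleanly from the greedy rule and does not even use the hypothesis $\alpha\ge 2\ln(2N)$, and then establish feasibility, which is exactly where that hypothesis is consumed and which I expect to be the crux.

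For fairness, fix an arbitrary competing allocation $\wv$. The set-aside step guarantees $u_i(\xv)\ge \Delta_i+\sum_t v_{i,t}z_t=\tilde u_{i,T}$ for every agent $i$ (here $\Delta_i=1/(2N)$), and the promised utilities are nondecreasing in $t$, so $\tilde u_{i,T}\ge\tilde u_{i,t}(z_t)$. Writing $u_i(\wv)=\sum_t v_{i,t}w_t$ and swapping the order of summation gives
\begin{align*}
\frac1N\sum_i\frac{u_i(\wv)}{u_i(\xv)}
&\le \frac1N\sum_i\frac{u_i(\wv)}{\tilde u_{i,T}}
= \sum_t w_t\cdot\frac1N\sum_i\frac{v_{i,t}}{\tilde u_{i,T}} \\
&\le \sum_t w_t\cdot\frac1N\sum_i\frac{v_{i,t}}{\tilde u_{i,t}(z_t)}
\le \alpha\sum_t w_t\le\alpha,
\end{align*}
where the penultimate inequality is precisely the greedy invariant enforced in Step~3 of the algorithm, and the last uses $\sum_t w_t\le B=1$. (Agents who like no good have $u_i(\wv)=0$ for every $\wv$; by the $0/0=1$ convention they contribute a fixed value $1\le\alpha$, and the footnote's observation that such agents do not increase the worst-case ratio handles this edge case separately.)

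For feasibility, the per-good bound $x_t\le 1$ follows once I show $\sum_t z_t\le 1/2$, since then $x_t=y_t+z_t\le 1/(2N)+1/2\le 1$; as the set-aside step already ensures $\sum_t y_t\le 1/2$, the entire task reduces to bounding $B_t:=\sum_{\tau\le t}z_\tau$ by $1/2$. Here I would use the logarithmic potential $\Phi_t=\sum_i\ln\tilde u_{i,t}$. Whenever $z_t>0$ the greedy constraint binds, $\sum_{i:v_{i,t}=1}1/\tilde u_{i,t}=\alpha N$; expressing the increment as $\Phi_t-\Phi_{t-1}=\int_0^{z_t}\sum_{i:v_{i,t}=1}(\tilde u_{i,t-1}+s)^{-1}\,ds$ and using that the integrand is decreasing in $s$ and equals $\alpha N$ at $s=z_t$ yields $\Phi_t-\Phi_{t-1}\ge\alpha N z_t$ (trivially with equality when $z_t=0$). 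Summing, using $\Phi_0=-N\ln(2N)$ and the crude bound $\tilde u_{i,t}\le 1/(2N)+B_t$, gives the self-referential inequality $\alpha B_t\le\ln(1+2NB_t)$.

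The final step, which I view as the main obstacle, is turning this self-bounding inequality into $B_t\le 1/2$. The map $g(B)=\ln(1+2NB)/\alpha$ is concave and increasing with $g(0)=0$, so $h(B)=g(B)-B$ is concave with $h(0)=0$ and $\{B\ge 0:B\le g(B)\}$ is an interval $[0,B^\ast]$; it therefore suffices to verify $g(1/2)\le 1/2$, i.e.\ $\ln(1+N)\le\alpha/2$. This is exactly where $\alpha\ge 2\ln(2N)$ enters, since then $\alpha/2\ge\ln(2N)\ge\ln(1+N)$ for all $N\ge 1$. As every $B_t$ satisfies $B_t\le g(B_t)$ and $B_t$ is nondecreasing, monotonicity forces $B_T\le B^\ast\le 1/2$, which completes feasibility and hence the theorem. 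I would still double-check the boundary behavior (the position of $g$'s fixed point relative to $1/2$ and the degenerate case $N=1$, where the chain is tight), but the concavity argument makes these routine rather than delicate.
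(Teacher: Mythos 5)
Your proposal is correct, and it splits into one half that matches the paper and one half that genuinely departs from it. The fairness bound is exactly the paper's argument with the LP packaging removed: the paper sets up the primal/dual pair for $\max_\wv \frac1N\sum_i u_i(\wv)/u_i(\xv)$, certifies that $\alpha$ is dual-feasible using the greedy invariant and the fact that promised utilities lower-bound final utilities, and invokes weak duality --- which, once unrolled, is precisely your chain of inequalities (including the same deferral of the ``agent likes no good'' corner case to the paper's footnote and the $0/0=1$ convention; neither you nor the paper fully closes that degenerate case). The feasibility half is where you differ. Both proofs use the logarithmic potential $\Phi_t=\sum_i\ln\tilde u_{i,t}$ and the same per-round inequality (the paper via $1-x\le-\ln x$ and telescoping, you via monotonicity of the integrand --- these are equivalent), but they cap $\Phi_T-\Phi_0$ differently. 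The paper writes $\tilde u_{i,T}\le u_i(\xv)\le 1$, justifying the last step by ``final utilities are at most 1''; as stated, that bound presupposes $\sum_t x_t\le 1$, i.e.\ the very feasibility being proved, so the paper's argument needs a first-violation/truncation patch to be airtight. Your route avoids this entirely: the crude bound $\tilde u_{i,t}\le \frac{1}{2N}+B_t$ yields the self-bounding inequality $\alpha B_t\le\ln(1+2NB_t)$, and the concavity argument (superlevel set of $h(B)=\ln(1+2NB)/\alpha-B$ is an interval containing $0$, with $h(1/2)\le 0$ since $\ln(1+N)\le\ln(2N)\le\alpha/2$) forces $B_t\le 1/2$. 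The price is the boundary analysis you flag --- when $h(1/2)=0$ (tight exactly at $N=1$, $\alpha=2\ln 2$) you need strict concavity of $\ln$ to rule out $B^*>1/2$ --- but that is indeed routine. Net effect: your feasibility argument is self-contained and immune to the circularity lurking in the paper's one-line justification, at the cost of a slightly longer fixed-point analysis.
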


\section{General Utilities and Budget}
\label{sec:single_public}
Having built some intuition about the online setting and the proportional fairness objective by considering the restricted special case of the problem wherein all values $v_{it}$ are in $\{0,1\}$ and the budget is $B=1$, we now turn to the more general model described in \Cref{sec:model}. Recall that this model generalizes the setting in \Cref{sec:binary} in two ways: 
\begin{enumerate}
    \item Agent values $v_{i,t}$ can now be any (non-negative) real number.
    \item The budget constraint is $\sum_{t=1}^T x_t \leq B$, for an arbitrary $B \ge 0$, so the per-round constraint of $x_t \le 1$, for each $t \in [T]$, is no longer redundant.   
\end{enumerate}

\subsection{The Case for Predictions}
\label{ssec:predictions}

In this general case, we prove that the problem becomes significantly more difficult: without access to any predictions, every online algorithm is $\Omega(T/B)$-proportionally fair (in fact, achieves $\Omega(T/B)$-approximation of the Nash welfare), in stark contrast to the $O(\log N)$-proportional fairness that we were able to achieve in the previous section. 

\begin{proposition}\label{prop:hardness_no_predictions}
Under general agent values and budget $B$, every online algorithm is $\Omega(T/B)$-proportionally fair (in fact, achieves $\Omega(T/B)$-approximation of the Nash welfare).
\end{proposition}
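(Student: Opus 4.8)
The plan is to prove the lower bound with a single agent ($N=1$), in line with the introduction's remark that the hardness persists even for $N=1$. Recall that for one agent $\alpha$-proportional fairness (\Cref{def:pf}) and an $\alpha$-approximation of the Nash welfare coincide, since $\frac1N\sum_i \frac{u_i(\wv)}{u_i(\xv)}$ collapses to $\frac{u_1(\wv)}{u_1(\xv)}$ and $\NSW(\xv)=u_1(\xv)$; so a single construction yields both claims at once. Fix any online algorithm $\sA$, work in the natural regime $B\ge 1$ (matching $B=1$ in \Cref{sec:binary} and integer budgets in multi-winner voting), and suppose toward a contradiction that $\sA$ is $\alpha$-proportionally fair with $\alpha < T/(2B)$. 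The hard instance presents $T$ goods with rapidly increasing values $v_{1,t}=\rho^{t}$, where $\rho:=T$ (any $\rho$ large relative to $\alpha B$ works). The intuition is that each arriving good is so much more valuable than all previous goods combined that fairness forces $\sA$ to spend a fresh $\Omega(1/\alpha)$ slice of budget on it, which over $T$ rounds overruns the budget unless $\alpha$ is large.

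The steps, in order, are: (i) \emph{Online indistinguishability.} Since $\sA$'s decision $x_\tau$ depends only on the observed prefix $v_{1,1},\ldots,v_{1,\tau}$, its allocation on the truncated instance $I_\tau$ (only the first $\tau$ goods, same budget $B$) equals the length-$\tau$ prefix of its allocation on the full instance. (ii) \emph{Fairness on each prefix.} Comparing on $I_\tau$ against the allocation placing its entire mass on good $\tau$ (feasible because $B\ge 1$, giving hindsight utility at least $v_{1,\tau}$), $\alpha$-fairness forces $u_1(\xv)=\sum_{t\le\tau} v_{1,t}\,x_t \ge v_{1,\tau}/\alpha$. (iii) \emph{Earlier goods are negligible.} Because values grow geometrically, $\sum_{t<\tau} v_{1,t}\,x_t \le v_{1,\tau-1}\sum_{t<\tau} x_t \le B\,v_{1,\tau-1} = (B/\rho)\,v_{1,\tau}$, a small fraction of $v_{1,\tau}/\alpha$ once $\rho > 2\alpha B$. (iv) \emph{Per-round spending floor.} Subtracting (iii) from (ii) gives $v_{1,\tau}\,x_\tau \ge v_{1,\tau}/\alpha - (B/\rho)\,v_{1,\tau}$, so $x_\tau \ge 1/\alpha - B/\rho = 1/\alpha - B/T > B/T$, using $\alpha < T/(2B)$ and $\rho=T$. (v) \emph{Budget contradiction.} Summing, $\sum_{\tau=1}^T x_\tau > T\cdot(B/T)=B$, violating $\sum_t x_t \le B$. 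Hence every $\alpha$-proportionally fair algorithm must satisfy $\alpha \ge T/(2B)=\Omega(T/B)$.

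The crux—and the only genuinely delicate point—is step (iii): bounding the utility $\sA$ has already secured from the earlier, cheaper goods. This is exactly what the geometric value sequence buys: taking the growth ratio $\rho$ large (as a function of $\alpha$ and $B$) makes the total worth of all previously allocated budget dwarfed by the latest good's value, so the fairness requirement at round $\tau$ cannot be discharged by past investments and must be paid for afresh. The one subtlety is that $\rho$ is chosen in terms of the hypothesized ratio $\alpha$, but this is harmless since we argue by contradiction for a fixed $\alpha$, so the instance is fully determined; alternatively one fixes $\rho=T$ outright and uses $\alpha<T/(2B)$ only at the final inequality, as above.
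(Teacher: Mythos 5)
Your proposal is correct and takes essentially the same route as the paper's proof: a single agent, geometrically growing values, prefix indistinguishability yielding a per-round spending floor of roughly $B/T$, and a contradiction with the total budget (the paper phrases this as forcing $x_t = B/T$ for all $t$ rather than as a strict budget overrun, but the mechanism is identical). The one technical slip is your use of literally truncated instances $I_\tau$ of length $\tau$: these have different horizons, so a \emph{horizon-aware} algorithm---which the proposition also covers---can distinguish them from the full instance from round $1$ onward, breaking your step (i); the paper avoids this by padding each instance with zero-valued goods to full length $T$ (its instances are $(1, M, \ldots, M^{t-1}, 0, \ldots, 0)$), and the same one-line fix makes your argument go through verbatim, since zero-valued goods contribute nothing to either the algorithm's utility or the hindsight optimum.
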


\begin{proof}
The hardness instances we consider will have $N = 1$ agent. In this case, the proportional fairness of an online algorithm on a given instance is equal to the maximum possible utility the agent could have obtained in hindsight divided by the utility she obtained under the algorithm; this is also equal to the approximation ratio for the Nash welfare. For clarity, we will omit the subscript $i$ in the notation in this proof, so that $v_t$ is the value that the agent has for good $t$, and $u(\xv)$ is the utility of the agent under allocation $\xv$. With $N=1$, a problem instance is defined by a sequence of $T$ values $(v_1, v_2, \ldots, v_T)$. 

Consider the following family of $T$ instances: for $t \in [T]$, instance $I_t := (1, M, \ldots, M^{t-1}, 0, \ldots, 0)$, where $M$ is a sufficiently large number. The algorithm knows in advance that it will see instance $I_t$ for some $t \in [T]$, and prove that it still cannot achieve proportional fairness better than $T/2B$. 

Let $\xv = (x_1,\ldots,x_T)$ be the allocation produced by the algorithm on instance $I_T = (1,M,\ldots,M^{T-1})$. For any $t \in [T]$, since the algorithm cannot distinguish between $I_t$ and $I_{T}$ up to round $t$, the allocation up to round $t$ under instance $I_t$ must also be $(x_1,\ldots,x_t)$. Further, without loss of generality, we can assume that the algorithm allocates $0$ in any round where the value is $0$. Therefore, for each $t \in [T]$, the allocation produced by the algorithm on instance $I_t$ is $(x_1, \ldots, x_t, 0, \ldots, 0)$. 

We claim that in order to be $\frac{T}{2B}$-proportionally fair, the algorithm needs to set $x_t \geq \frac{B}{T}$ for all $t$. To see this, suppose to the contrary that $x_t < \frac{B}{T}$ for some $t$. Then on instance $I_t$, the agent's utility under the algorithm is
$$u(\xv) 
= \sum_{\tau=1}^t x_\tau M^\tau 
< \sum_{\tau=1}^{t-1}M^\tau + \frac{B}{T} \cdot M^t < \frac{2B}{T}M^t,
$$
where the last inequality holds when $M$ is large enough. On the other hand, the hindsight-optimal allocation on $I_t$ is to simply set $x^*_t = 1$, which gets utility $M^t$. Therefore, if $x_t < \frac{B}{T}$ then the algorithm cannot be $\frac{T}{2B}$-proportionally fair on $I_t$.

We have shown that $x_t \geq \frac{B}{T}$ for all $t$ is necessary for the algorithm to be $\frac{T}{2B}$-proportionally fair. However, since the overall budget is $B$, the only way this can happen is if $x_t = \frac{B}{T}$ for all $t$. The same calculation as above shows that under this allocation, the algorithm is at most $(T/2B)$-proportionally fair, establishing the desired $\Omega(T/B)$ lower bound. 
\end{proof}

The hardness instance used above is specifically engineered to exploit the fact that the algorithm has no information about the future. In most practical settings, it is reasonable to assume that the algorithm has access to some information about the input. This could come from historical data, stochastic assumptions, or simply from properties of the problem at hand (e.g. if $v_{i,t}$ represents the monetary value that agent $i$ has for good $t$, then we may have bounds on how large $v_{i,t}$ can be.)

Motivated by this, we now turn to prediction-augmented algorithms and allow the algorithm access to additional side-information about agents' valuations. Clearly, if the entire valuation matrix $(v_{i,t})_{i\in[N],t\in[T]}$ is available beforehand, then the problem is trivial; the challenge lies in understanding what \emph{minimal} additional information (or `prediction') can lead to sharp improvements in performance, and how robust these improvements are to errors in these predictions. 
To this end, we now adapt an idea introduced by~\citet{BGGJ22} for online allocation with private goods, and assume that the algorithm has side information about each agent's total value for all the goods. 

\begin{definition}[Total Value Predictions]
\label{def:predictions}
For any agent $i$, we define her total value to be $V_i=\sum_{t=1}^Tv_{i,t}$. 
Moreover, for $c_i,d_i \ge 1$, $\tilde{V}_i$ is said to be a $[c_i,d_i]$-prediction of $V_i$ if $\tilde{V_i} \in \left[\frac{1}{d_i} V_i, ~ c_i V_i\right]$. 
\end{definition}
In other words, $c_i$ and $d_i$ denote the multiplicative factors by which the prediction $\tilde{V}_i$ may overestimate and underestimate, respectively, the value of $V_i$. When $c_i=d_i=1$, we call them \emph{perfect predictions}. 

In the next section, we assume that we have access to $\tilde{V}_i$ for each agent $i$. The purpose of the $c_i$ and $d_i$ is to parameterize the \emph{robustness} of our algorithm, i.e., the degradation in its performance as the predictions get worse. Our algorithm does need to know (an upper bound on) the $d_i$'s for tuning one of its parameters; it does not however need to know the $c_i$'s (these are only used in the analysis).

\subsection{Achieving Proportional Fairness with Predictions}
\label{sec:general_alg}

Using the above notion of predictions, we design \Cref{alg:General-L-1}, a variant of our earlier \algoname algorithm that has a dramatically better proportional fairness guarantee compared to the hardness result of $\Omega(T/B)$ in~\cref{prop:hardness_no_predictions}. Given perfect predictions, our algorithm achieves a proportional fairness of $O(\log(T/B))$. Moreover,~\cref{alg:General-L-1} turns out to be remarkably robust to prediction errors; in particular, all our asymptotic guarantees remain unchanged as long as all the $c_i=O(1)$ and $d_i=O(\log(T/B))$. 

As before, the algorithm splits the budget into two parts, and the total allocation to the good in round $t$ is obtained by adding the contributions (semi-allocations) from each part, i.e., $ x_{t} =  y_{t}+z_{t}$. The semi-allocation from the first (\emph{set-aside}) part is $y_t=B/(2T)$ for each $t \in [T]$. This portion of the allocation guarantees each agent $i$ utility at least $\Delta_i = \frac{B}{2T} \cdot V_i$. Now in each round $t\in[T]$, the algorithm uses the second the part of the budget to compute a \emph{greedy} semi-allocation $z_t$. This is done by choosing $z_t$ to optimize a function of the agents' \emph{predicted promised utilities}.


\begin{definition}[Predicted Promised Utility]
\label{def:predicted}
Given a prediction $\tilde{V}_i$ of the total value of agent $i$, the predicted promised utility of agent $i$ in round $t\in[T]$ is defined as
\begin{equation}
\label{eq:pred}
    \tilde{u}_{i,t}(z_t) = \frac{B}{2T} \cdot \tilde{V}_i + \sum_{\tau = 1}^t v_{i,\tau} \cdot z_{\tau}.
\end{equation}
We omit $z_1,\ldots,z_{t-1}$ from the argument of $\tilde{u}_{i,t}$ since they are fixed prior to round $t$. 
\end{definition}

Note that this quantity can be computed by the algorithm, as a function of $z_t$ it wants to choose, since it has knowledge of $\tilde{V}_i$ (prediction) and semi-allocations $\{z_{\tau}\}_{\tau<t}$ from the previous rounds. We use these predicted promised utilities in \Cref{alg:General-L-1} to achieve the following guarantee.


\begin{algorithm}[!ht]
\caption{\algoname\ algorithm for the non-batched setting.}
\KwIn{Target threshold $\alpha$; total value predictions $(\tilde{V}_i)_{i\in[N]}$}
\begin{algorithmic}[1]
\label{alg:General-L-1}
\FORALL{$t=1$ to $T$}
\STATE Set-aside semi-allocation: set ${y}_t = \frac{B}{2T}$. 
\STATE Greedy semi-allocation: ${z}_t = \min\{z^*_t, 1 - {y}_t \}$, where 
    $z^*_t = \min \left\{z \geq 0:  \frac{1}{N} \sum_{i=1}^N \frac{v_{i,t}}{\tilde{u}_{i,t}(z)}\leq \frac{\alpha}{2B}\right\}$. 
\STATE Allocate $ x_{t} =  y_{t}+ z_{t}$.
\ENDFOR
\end{algorithmic}
\end{algorithm}

\begin{restatable}{theorem}{general}
\label{thm:general}
For any $\alpha \geq 4\ln\left(\frac{2T}{B}\right) + \frac{4}{N} \sum_i \ln(d_i)$, \cref{alg:General-L-1} produces a feasible allocation $\xv$, which satisfies
$
\textstyle\max_{\wv} \frac{1}{N} \sum_{i=1}^N \frac{u_i(\wv)}{c_i \cdot u_i({\xv})} \leq \alpha,
$
where the maximum is over all feasible allocations $\wv$. 
\end{restatable}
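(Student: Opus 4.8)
The plan is to establish two things about the output $\xv$ of \cref{alg:General-L-1}: that it is feasible (each $x_t\in[0,1]$ and $\sum_t x_t\le B$), and that it meets the $c_i$-weighted competitive bound. Both parts are driven by a potential/primal-dual style argument built around the final predicted promised utility $\tilde u_{i,T}=\frac{B}{2T}\tilde V_i+\sum_\tau v_{i,\tau}z_\tau$, which is exactly the quantity the greedy rule controls round by round.

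For feasibility, the per-round cap $z_t\le 1-y_t$ built into the algorithm immediately gives $x_t=y_t+z_t\le 1$, so the real work is to show $\sum_t z_t\le B/2$ (the set-aside half already uses $\sum_t y_t=B/2$). The plan is to track the potential $\Phi_t=\frac1N\sum_i\ln\tilde u_{i,t}$. Writing the round-$t$ increase as $\int_0^{z_t}\big(\frac1N\sum_i \frac{v_{i,t}}{\tilde u_{i,t}(z)}\big)\,dz$ and noting that the (decreasing) integrand stays $\ge\frac{\alpha}{2B}$ on $[0,z_t]$ --- since $z_t$ never exceeds the greedy threshold $z^*_t$ at which the integrand equals $\frac{\alpha}{2B}$ --- gives $\Phi_t-\Phi_{t-1}\ge\frac{\alpha}{2B}z_t$, hence $\sum_t z_t\le\frac{2B}{\alpha}(\Phi_T-\Phi_0)$. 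The key quantitative step is bounding $\Phi_T-\Phi_0=\frac1N\sum_i\ln\frac{\tilde u_{i,T}}{\tilde u_{i,0}}$: using $z_\tau\le 1-\frac{B}{2T}$ together with $\tilde V_i\ge V_i/d_i$ one checks $\tilde u_{i,T}/\tilde u_{i,0}\le\frac{2Td_i}{B}$ (the spurious $+1$ that the cruder bound $z_\tau\le1$ would leave cancels against $1-d_i\le0$). This yields $\Phi_T-\Phi_0\le\ln\frac{2T}{B}+\frac1N\sum_i\ln d_i$, so $\sum_t z_t\le B/2$ holds exactly at the stated threshold for $\alpha$.

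For the competitive bound, I would first reduce the $c_i$-weighted objective to one phrased purely in terms of $\tilde u_{i,T}$: a one-line computation using $\tilde V_i\le c_iV_i$ and $c_i\ge1$ shows $c_i\,u_i(\xv)\ge\tilde u_{i,T}$, so it suffices to bound $\max_\wv\frac1N\sum_i\frac{u_i(\wv)}{\tilde u_{i,T}}=\max_\wv\sum_t w_t\,\gamma_t$ with $\gamma_t=\frac1N\sum_i\frac{v_{i,t}}{\tilde u_{i,T}}$, a simple LP over $w_t\in[0,1]$, $\sum_t w_t\le B$. Using monotonicity $\tilde u_{i,T}\ge\tilde u_{i,t}(z_t)$ I would split the rounds: on \emph{unsaturated} rounds ($z_t=z^*_t<1-y_t$) the greedy rule forces $\gamma_t\le\frac1N\sum_i\frac{v_{i,t}}{\tilde u_{i,t}(z_t)}\le\frac{\alpha}{2B}$, so these contribute at most $\frac{\alpha}{2B}\sum_t w_t\le\frac{\alpha}{2}$.

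The main obstacle is the \emph{saturated} rounds ($x_t=1$), where the greedy bound on $\gamma_t$ can fail. Here the plan is to use $w_t\le1=x_t$ and bound the total saturated contribution by the aggregate $\frac1N\sum_i\frac{s_i}{\tilde u_{i,T}}$, with $s_i=\sum_{t:x_t=1}v_{i,t}$. The delicate step is to control each summand: lower-bounding the denominator by the set-aside mass $\frac{B}{2T}\tilde V_i$ plus the saturated rounds' own contribution $(1-\frac{B}{2T})s_i$, and using $s_i\le V_i\le d_i\tilde V_i$, the summand is at most $\big(1-\frac{B}{2T}(1-\tfrac1{d_i})\big)^{-1}$, which I would verify is $\le 2\ln\frac{2T}{B}+2\ln d_i$ for every $d_i\ge1$ (the worst case being $T=B$). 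Averaging gives a saturated contribution of at most $\frac{\alpha}{2}$, so the two parts sum to $\alpha$; agents with $V_i=0$, contributing the conventional ratio $1\le2\ln\frac{2T}{B}$, are absorbed into precisely this bound. I expect this saturated-round accounting --- forcing the crude per-round factor $\frac{1}{1-B/(2T)}$ down to the right logarithmic budget by exploiting both the set-aside reserve in the denominator and the prediction's lower-bound error $d_i$ --- to be the crux of the whole argument.
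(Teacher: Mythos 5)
Your proposal is correct, and while your feasibility argument coincides with the paper's, your proof of the competitive bound takes a genuinely different route. The paper derives \Cref{thm:general} as the $L=1$ instance of the batched result (\Cref{thm:batched}, together with \Cref{prop:properties} showing the combinatorial greedy step has the required properties): it runs an explicit weak-duality argument with dual certificate $q=\frac{\alpha}{2B}$ and $p_t=\max\{0,\frac{1}{N}\sum_i v_{i,t}/\tilde{u}_{i,t}(z_t)-q\}$, checks dual feasibility via \Cref{lem:predicted_utilities} (your inequality $c_i u_i(\xv)\ge \tilde{u}_{i,T}$), and controls the dual objective through a single ``key invariant'' $\sum_t (p_t+q)z_t\le \alpha/4$ --- the same telescoping log-potential bound you use for feasibility --- which does double duty: it yields $\sum_t z_t\le B/2$, and, since $p_t>0$ forces $z_t=1-\frac{B}{2T}\ge\frac12$, also $\sum_t p_t\le\alpha/2$. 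You instead bound the primal directly: your unsaturated-round estimate $\frac{\alpha}{2B}\sum_t w_t\le\frac{\alpha}{2}$ plays the role of the $Bq$ term, but your saturated-round accounting replaces the paper's second use of the potential. Rather than charging saturated rounds to the potential increase, you charge them to their own appearance in the denominator, $\tilde{u}_{i,T}\ge \frac{B}{2T}\tilde{V}_i+(1-\frac{B}{2T})s_i$ combined with $\tilde{V}_i\ge s_i/d_i$, giving the per-agent bound $\bigl(1-\frac{B}{2T}(1-1/d_i)\bigr)^{-1}$. This self-bounding step is sound, and the numerical claim you deferred does hold: if $d_i\ge e/2$, the right side satisfies $2\ln\frac{2T}{B}+2\ln d_i\ge 2\ln 2+2\ln(e/2)=2$, which dominates the left side (at most $\frac{1}{1-B/(2T)}\le 2$ since $B\le T$), while if $d_i<e/2$ the left side is at most $\frac{1}{1-\frac12(1-2/e)}\approx 1.15<2\ln 2$. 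The trade-off between the two proofs: the paper's single-invariant duality argument lifts verbatim to the batched model of \Cref{sec:batched}, whereas your argument is more elementary (weak duality carried out by hand, no explicit dual certificate) and reveals the sharper fact that saturated rounds contribute only a constant ($\le 2$) per agent; its cost is reliance on the threshold inequality above, which is needed only because $\alpha$ may be as small as $4\ln 2<4$, and which is the one point where a full write-up would have to supply the case analysis you only sketched.
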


This result is almost subsumed by our positive result (\Cref{thm:batched}) for the more general model in the next section, in which we allow a batch of $L$ public goods to arrive in each round; setting $L=1$ recovers precisely the bounds derived in this section. The only difference is that in the algorithm for the batched model (\Cref{alg:batched}), the greedy semi-allocation step is a convex optimization problem, which can be solved up to an $\epsilon$ error; for the single-good-per-round case, we are able to replace this with a combinatorial step in \Cref{alg:General-L-1} that can be performed exactly in polynomial time. This requires us to show that the greedy semi-allocations computed by this combinatorial step satisfy the same properties (specifically, those in \Cref{prop:properties_batched}) that a solution to the convex optimization problem would have satisfied. We present this result and its proof in the appendix as \Cref{prop:properties}. 

Let us consider the implications of \Cref{thm:general}. The expression in the statement of \Cref{thm:general} is not exactly the proportional fairness objective, since the term for each agent $i$ is scaled with a (potentially different) factor $c_i$. However, applying the arguments in \Cref{subsec:prop_fairness}, we can nonetheless convert this into an approximation of proportional fairness, the core, and the Nash social welfare. The proof is in the appendix. 
\begin{corollary}\label{cor:general_fair}
For $\alpha \geq 4\ln\left(\frac{2T}{B}\right) + \frac{4}{N} \sum_i \ln(d_i)$, \cref{alg:General-L-1} is 
\begin{enumerate}
    \item $(\alpha \cdot \max_{i\in [N]} c_i)$-proportionally fair, and hence in the $(\alpha \cdot \max_{i \in [N]} c_i)$-core, and
    \item achieves $(\alpha \cdot (\prod_{i \in [N]} c_i)^{\frac{1}{N}})$-approximation of the Nash welfare.
\end{enumerate}
\end{corollary}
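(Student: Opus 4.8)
The plan is to derive both claims directly from the guarantee in \Cref{thm:general}, which already has the shape of a proportional-fairness bound except that each agent's ratio is scaled by an individual factor $c_i$. The only work is to ``unwind'' these $c_i$ factors, and the key observation is that the two target objectives aggregate the per-agent terms differently (an arithmetic sum for proportional fairness, a geometric mean for Nash welfare), so the $c_i$'s will combine differently in each case: via their maximum for part~1 and via their geometric mean for part~2.

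For part~1, I would fix an arbitrary feasible $\wv$ and write, for each agent $i$, the identity $\frac{u_i(\wv)}{u_i(\xv)} = c_i \cdot \frac{u_i(\wv)}{c_i \, u_i(\xv)}$. Using $c_i \le \max_{j} c_j$ for every $i$, this gives $\frac{u_i(\wv)}{u_i(\xv)} \le (\max_j c_j) \cdot \frac{u_i(\wv)}{c_i \, u_i(\xv)}$. Averaging over $i$ and invoking the bound of \Cref{thm:general} on $\frac{1}{N}\sum_i \frac{u_i(\wv)}{c_i \, u_i(\xv)}$ yields $\frac{1}{N}\sum_i \frac{u_i(\wv)}{u_i(\xv)} \le \alpha \cdot \max_j c_j$, which is exactly $(\alpha \cdot \max_{i} c_i)$-proportional fairness by \Cref{def:pf}. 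The core membership then follows immediately by applying \Cref{prop:core} with approximation factor $\alpha \cdot \max_i c_i$.

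For part~2, I would follow the structure of the proof of \Cref{prop:nsw}. Starting from $\frac{\NSW(\wv)}{\NSW(\xv)} = \big(\prod_i \frac{u_i(\wv)}{u_i(\xv)}\big)^{1/N}$, I factor each ratio as above to separate the scaling: $\prod_i \frac{u_i(\wv)}{u_i(\xv)} = \big(\prod_i c_i\big) \cdot \prod_i \frac{u_i(\wv)}{c_i \, u_i(\xv)}$. Taking $N$-th roots and then applying AM--GM to the second product gives $\big(\prod_i \frac{u_i(\wv)}{c_i \, u_i(\xv)}\big)^{1/N} \le \frac{1}{N}\sum_i \frac{u_i(\wv)}{c_i \, u_i(\xv)} \le \alpha$ by \Cref{thm:general}. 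Combining, $\frac{\NSW(\wv)}{\NSW(\xv)} \le \alpha \cdot \big(\prod_i c_i\big)^{1/N}$ for every feasible $\wv$, which is the claimed Nash-welfare approximation.

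I do not expect a genuine obstacle here, since the argument is essentially algebraic manipulation plus AM--GM reused from the two propositions in \Cref{subsec:prop_fairness}. The one point requiring care is matching the aggregation of the $c_i$'s to the objective: replacing each $c_i$ by $\max_i c_i$ is the correct (and tight) move inside the arithmetic average, whereas for the geometric mean the $c_i$'s naturally collect into the geometric mean $(\prod_i c_i)^{1/N}$ and must \emph{not} be crudely bounded by the maximum, as that would give a weaker constant. Keeping the inequality directions consistent (recall $c_i \ge 1$) is the only thing to double-check.
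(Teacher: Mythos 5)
Your proof is correct and follows essentially the same route as the paper: part~1 bounds each $c_i$ by $\max_j c_j$ inside the arithmetic average and invokes \Cref{prop:core}, and part~2 factors out $\bigl(\prod_i c_i\bigr)^{1/N}$ and applies AM--GM to the scaled ratios exactly as in the paper's proof (which phrases it via the NSW-maximizing allocation $\xv^*$, an immaterial difference). No gaps to report.
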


\begin{proof}

(1) \Cref{thm:general} implies that 
    $$\frac{1}{\max_i c_i} \cdot \max_{\wv} \frac{1}{N} \sum_{i=1}^N \frac{u_i(\wv)}{u_i({\xv})} \leq \max_{\wv} \frac{1}{N} \sum_{i=1}^N \frac{u_i(\wv)}{c_iu_i({\xv})} \leq \alpha,$$
    which directly implies the desired proportional fairness guarantee and, by \Cref{prop:core}, the desired core guarantee. While the same guarantee carries over to Nash welfare approximation, repeating the proof of \Cref{prop:nsw} actually provides a better approximation. 
    
(2) Let $\xv^*$ be the allocation maximizing the Nash social welfare in hindsight. We have
$$\max_{\wv} \frac{1}{N} \sum_{i=1}^N \frac{u_i(\wv)}{c_iu_i({\xv})}
\geq \frac{1}{N} \sum_{i=1}^N \frac{u_i(\xv^*)}{c_iu_i({\xv})}
\stackrel{(a)} \geq \left(\prod_{i=1}^N \frac{u_i(\xv^*)}{c_iu_i({\xv})}\right)^{\frac1N}
\stackrel{(b)}= \left(\prod_{i=1}^N \frac{1}{c_i}\right)^{\frac1N} \frac{\NSW(\xv^*)}{{\NSW({\xv})}}$$
where $(a)$ is by the AM-GM inequality, and $(b)$ is by the definition of the Nash welfare. This, together with \Cref{thm:general}, yields the second part of the corollary.
\end{proof}

\subsection{Hardness with Predictions}
\label{ssec:publiclower}

\cref{thm:general} shows that in online allocation of public goods with general values and budget, having access to reasonable predictions of each agent's total value can lead to a dramatic improvement in the proportional fairness guarantee from $\Omega(T/B)$ to $O(\log(T/B)))$.  
Given the size of the side information (which lies in $\mathbb{R}^N$, since we need one prediction per agent) relative to the ambient size of the input (which lies in $\mathbb{R}^{NT}$, with one valuation per agent per round), this is a surprising improvement in performance.

One may wonder whether these predictions are so strong that one can do even better. The following result shows, however, that even with a single agent, and perfect knowledge of her total value $V_1$, any online algorithm is $\Omega(\log(T/B))$-proportionally fair. Thus~\cref{alg:General-L-1} is essentially optimal for our setting. 
\begin{theorem}\label{thm:general-hardness-with-predictions}
For $N=1$ agent, every online algorithm is $\Omega(\log(T/B))$-proportionally fair (in fact, achieves an $\Omega(\log(T/B))$-approximation for the Nash welfare), even with perfect knowledge of horizon $T$ and the total value of the agent $\tilde{V}_1 = V_1 = \sum_{t=1}^T v_{i,t}$.
\end{theorem}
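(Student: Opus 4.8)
The plan is to mirror the structure of the proof of \Cref{thm:lower-bound-binary}: exhibit a family of instances that are pairwise indistinguishable on long prefixes, argue that a deterministic online algorithm is forced to commit the same (shared) allocations on these prefixes, and then reduce the resulting worst-case ratio to the harmonic lower bound of \Cref{lem:lower-bounds}. Throughout I would assume $B < T$ (otherwise the algorithm can set every $x_t=1$ and is trivially optimal, while $\log(T/B)\le 0$ makes the claim vacuous), and for cleanliness that $B$ is an integer. As already observed in the proof of \Cref{prop:hardness_no_predictions}, for $N=1$ the proportional fairness of an algorithm on an instance equals its Nash welfare approximation, which in turn equals $\OPT/\ALG$, where $\OPT$ is the best utility in hindsight (namely $x_t=1$ on the $B$ highest-valued goods) and $\ALG$ is the utility the algorithm obtains. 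So it suffices to force this ratio to be $\Omega(\log(T/B))$ on some instance of a prescribed fixed total value.

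The crux -- and the place where this construction must differ from that of \Cref{prop:hardness_no_predictions} -- is that the total value $V_1=\sum_t v_t$ must be identical across all instances, so that revealing it to the algorithm conveys no information. To this end I would set $K=\lfloor\sqrt{T/B}\rfloor$ and introduce, for each $j\in[K]$, a \emph{real block} $R_j$ consisting of $B$ goods each of value $j$; I would also introduce, for each $k$, a \emph{closing block} $C_k$ whose total value is exactly $B\sum_{j>k} j$ but which is spread over all $T-kB$ remaining rounds, so that each of its goods has value $q_k \le 1/2$. Instance $I_k$ is then $R_1,\ldots,R_k$ followed by $C_k$ (and for $I_K$ the closing block is empty). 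Every $I_k$ has the same total value $V_1=B\sum_{j=1}^K j$, and $I_k$ agrees with the master instance $I_K$ through the end of block $R_k$, diverging only afterwards (a real block of per-good value $k+1$ versus a closing block of per-good value $q_k<1$). Because the algorithm is deterministic, its total allocation $a_j:=\sum_{t\in R_j} x_t$ to each block $R_j$ with $j\le k$ is therefore the same on $I_k$ as on $I_K$.

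With this in place the estimates are routine. On $I_k$, the $B$ highest-valued goods are exactly those of $R_k$, so $\OPT = Bk$; meanwhile the algorithm's utility is $\sum_{j\le k} j\,a_j$ from the real blocks plus at most $q_k\cdot B\le B$ from the closing block, giving $\ALG \le B + \sum_{j\le k} j\,a_j$. Writing $y_j:=a_j/B\in[0,1]$ (so that $\sum_j y_j\le 1$ by the budget constraint on the master run), the ratio on $I_k$ is at least $\tfrac{k}{1+\sum_{j\le k} j\,y_j}$. Taking the maximum over $k\in[K]$ and invoking \Cref{lem:lower-bounds} with $S=K$ and $W=1$ (absorbing the $k$-versus-$k+1$ discrepancy into a factor of $2$) lower-bounds the worst-case ratio by $\Omega(H_K)=\Omega(\log K)=\Omega(\log(T/B))$, where the last step uses $K=\Theta(\sqrt{T/B})$.

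The main obstacle to watch is reconciling two competing demands: the total value must be frozen at $V_1$ across the entire family, yet the exploitable (high per-good value) opportunity that $\OPT$ uses must grow with the truncation index $k$. The closing block $C_k$ is precisely the device that resolves this tension -- it soaks up the ``missing'' value $B\sum_{j>k} j$ into many low-value-per-good items that neither $\OPT$ (which is budget-limited and prefers the value-$k$ goods of $R_k$) nor the algorithm can profitably exploit. Verifying the feasibility of this padding is where care is needed: one must check that $C_k$ fits within the $T-kB$ available rounds while keeping $q_k\le 1/2$, which is exactly what forces the choice $K=\Theta(\sqrt{T/B})$ and, through the logarithm in \Cref{lem:lower-bounds}, still yields the desired $\Omega(\log(T/B))$ bound.
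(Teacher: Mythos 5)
Your proposal is correct and follows essentially the same route as the paper's own proof: a family of prefix-indistinguishable instances with identical total value $V_1$, built from $\Theta(\sqrt{T/B})$ escalating-value blocks, whose worst-case ratio is then reduced to \Cref{lem:lower-bounds} with $W$ equal to the (normalized) budget---the only difference is bookkeeping, as the paper equalizes value block-by-block (each real block $S_r$ is paired with a same-length, same-total-value uniform block $S'_r$), whereas you equalize globally with a single low-value closing block $C_k$. One small slip: with $K=\lfloor\sqrt{T/B}\rfloor$ the claim $q_k\le 1/2$ can fail by a constant factor (e.g.\ when $T/B$ is a perfect square, $q_1$ is slightly above $1/2$), but your argument only needs $q_k\le 1$ (so that $\OPT$ sits in $R_k$ and the closing block contributes at most $B$), which does hold, so the bound is unaffected.
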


\begin{proof}
As in the proof of \Cref{prop:hardness_no_predictions}, approximations to proportional fairness and Nash welfare are equivalent with $N=1$ agent, both coinciding with the ratio of the agent's maximum utility in hindsight to the agent's utility under the algorithm. 

For the sake of contradiction, suppose that that there exists an online algorithm whose approximation ratio is $o(\log(T/B))$.
Let $T= B\cdot (T'(T'+1)-2) /2$ for some $T'$. For $r \in[T'-1]$, we denote with $S_r$ the sequence of $B\cdot (r+1)$ goods that arrive consecutively for which the agent has utility equal to $(r+1)/T'$ for the first $B$ goods and utility equal to $0$ for the remaining goods and with $S'_r$ the sequence of $B\cdot (r+1)$ goods such that the agent has utility equal to $1/T'$ for each of the goods. Now, for each $k \in [T'-1]$, let $I_k=(S_1,S_2, \ldots, S_k,S'_{k+1}, \ldots, S'_{T'-1})$, i.e. $I_k$ be the instance that is constructed by concatenating  $S_1, S_2, \ldots S_k, S'_{k+1},\ldots, S_{T'-1}$ in that order.  
Notice that for $r \in [T'-1]$, $S_r$ and $S'_r$ have the same number of goods, equal to $B(r+1)$, and for each of them the agent has the same accumulated utility, equal to $B(r+1)/T'$. Hence,  each instance $I_k$ consists of $T$ goods and the agent has the same accumulated utility for all these instances.  We assume that the algorithms is aware that  it will see instance $I_k$ for some $k\in [T'-1]$.

With slight abuse of notation, for $r \in [T'-1]$, we let $S_r$ to also denote the set of goods that appear in the sequence $S_r$. We   denote with $\xv$ the allocation produced by the online algorithm on instance $I_{T'-1}$. For $r \in [T'-1]$, let $x_{S_r} = \sum_{t \in S_r} x_t$ denote the total allocation to goods in $S_r$. Now, for $k \in [T'-1]$, let $\xv^k$ denote the allocation produced by the algorithm on instance $I_k$. As the algorithm cannot distinguish between $I_k$ and $I_{T'-1}$ for the first $k$ blocks (i.e. $S_1,\ldots,S_k$),  under instance $I_k$, the algorithm must assign a total allocation of $x_{S_\ell}$ to block $S_{\ell}$ for each $\ell \le k$.

 Moreover, under instance $I_k$, the optimal algorithm allocates  $1$ to each of the first $B$ goods of $S_k$.   Hence, we have that under instance  $I_k$, if $\alpha_k$ is the approximation ratio of the algorithm, then
\begin{align*}
  \alpha_k & \geq   \frac{B \cdot (k+1)/N}{ ( 2\cdot x_{S_1} + 3 \cdot x_{S_2}+\ldots +  (k+1) x_{S_{k}} +   (1-x_{S_1}-x_{S_2}-\ldots -x_{S_{k}} )  )/N }\\
 & =B \cdot \frac{ (k+1)}{  x_{S_1} + 2 \cdot x_{S_2}+\ldots +  k x_{S_k} +   1 }.
\end{align*}

Hence, if $\alpha$ is the worst-case approximation ratio over all the instances $I_k$, we have
\begin{align*}
    \alpha \geq  B \cdot \max_{k\in [T'-1]}\frac{(k+1)}{  x_{S_1} + 2 \cdot x_{S_2}+\ldots +  k x_{S_k} +   1 } 
\end{align*}
and by applying \Cref{lem:lower-bounds}, with $S=T'$ and $W=B$, we get that $\alpha \geq B \cdot H_{T'}/(2B)$. Thus, we have that  $\alpha = \Omega( \log(T')) = \Omega(\log(T/B))$. 
\end{proof} 

\section{Batched Arrival of Public Goods}
\label{sec:batched}

In this section, we present our most general setting, which we refer to as the \emph{batched public goods} model. This not only generalizes the model in \Cref{sec:model} with a single good per round, but also the setting of \citet{BGGJ22} with private goods. 

In each round $t \in [T]$, a batch of $L$ public goods arrive (as opposed to a single public good). Upon their arrival, the algorithm learns the value $v_{i,l,t} \ge 0$ of each agent $i \in [N]$ for each good $l \in [L]$ in the batch. It then must irrevocably decide the allocation $x_{l,t} \in [0,1]$ to each good $l$ in the batch, before the next round. We use $\xv_t = (x_{l,t})_{l \in [L]}$ to denote the allocation in round $t$, and $\xv = (\xv_t)_{t \in [T]}$ to denote the final allocation. We also incorporate two types of constraints on the allocation $\xv$:
\begin{enumerate}
	\item (Per-round constraint) $\sum_{l=1}^L x_{l,t} \leq 1$ for all $t \in [T]$. 
	\item (Overall constraint) $\sum_{t=1}^T \sum_{l=1}^L x_{l,t} \leq B$ for $B \geq 0$ known to the algorithm in advance.
\end{enumerate}
The allocation $x_{l,t}$ to good $l$ in round $t$ yields utility $v_{i,l,t} \cdot x_{l,t}$ to every agent $i$. We assume that agent utilities are additive, i.e., the final utility of agent $i$ is given by $u_i(\xv) = \sum_{t=1}^T \sum_{l=1}^L v_{i,l,t} \cdot x_{l,t}$. Our aim, as before, is is to realize an $\alpha$-proportionally fair allocation (\Cref{def:pf} does not require any modifications except using these new utility functions) for the smallest possible $\alpha$.

Note that the overall constraint is the same as in the model in \Cref{sec:model} with a single good per round. However, in the batched model, the per-round constraint can place additional restriction on how much budget can be spent in any single round. Note also that choosing the per-round bound to be $1$ is without loss of generality; in particular, since agent utilities are linear, having a per-round constraint of $b$ can be reduced to our setting by scaling each allocation, as well as the total budget, by a factor of $b$. Finally, the per-round constraint becomes vacuous if $B \leq 1$, and the overall constraint becomes vacuous if $B \geq T$; therefore, we assume, without loss of generality, that $1 \leq B \leq T$. 

\emph{This model captures the following special cases.} Before we dive into the algorithm and analysis, we briefly mention some special cases of interest which this model generalizes.

\begin{enumerate}
    \item \emph{Single public good.} When $L = 1$, we trivially recover the setting of \Cref{sec:single_public}, where there is one public good in each round. 
    \item \emph{Single private good.} When $L = N$, $B=T$, and $v_{i,l,t} = 0$ if $i \neq j$, we recover the setting studied by \citet{BGGJ22}. In their setting, there is a single private good arriving in each round, which the algorithm needs to split among the $N$ agents. When cast in our model, $v_{i,i,t}$ is the value that agent $i$ has for the good in round $t$, and $x_{i,t}$ is the fraction of good $t$ that agent $i$ is allocated. Note that Banerjee et al. only study per-round constraints, so one of our contributions is a generalization of their result to the budgeted setting. 
    \item \emph{Batched private goods.} When $L = L' \cdot N$, $B=T$, and $v_{i,l,t} = 0$ if $i \not\equiv j \, (\mathrm{mod} \; N)$, we capture a setting where there are $L'$ private goods arriving in each round, and the algorithm can (fractionally) allocate at most $1$ good in total among the agents in each round. 
\end{enumerate}

\subsection{The \algoname Algorithm for Batched Public Goods}\label{sec:batched_alg}

We present an algorithm, \Cref{alg:batched}, for the batched public goods model which generalizes our guarantees from \Cref{sec:single_public} and, partly, from the work of \citet{BGGJ22} as well. First, we need to extend predicted promised utilities to the batched model. 

\begin{definition}[Predicted Promised Utility]
\label{def:predictedbatch}
The predicted promised utility of agent $i$ in round $t$ is
\begin{equation}
\label{eq:batchpred}
    \tilde{u}_{i,t}(\zv_1, \ldots, \zv_t) = \frac{B}{2\min\{N, L\}T} \cdot \tilde{V}_i + \sum_{\tau = 1}^t \sum_{l=1}^L v_{i,l,\tau} z_{l,\tau}
\end{equation}
For clarity, we will often omit the dependence on $\zv_1, \ldots, \zv_{t-1}$ and just write $\tilde{u}_{i,t}(\zv_t)$.
\end{definition}

\begin{algorithm}[ht]
\caption{\algoname Algorithm for Batched Public Goods}
\KwIn{Target proportional-fairness level $\alpha$; total value predictions $(\tilde{V}_i)_{i\in[N]}$}
\begin{algorithmic}[1]
\label{alg:batched}
\STATE Define $q = \frac{\alpha}{2B}$. 
\FORALL{$t=1$ to $T$}
\STATE Given values $\{v_{i,l,t}\}_{i\in[N],l\in[L]}$, find `favorite goods' set $F_t$ as follows: Initialize $F_t \gets \emptyset$. 
\FORALL{$i=1$ to $N$}
    \STATE Let $l_i \gets \arg\max_l \{ v_{i,l,t} \}$ (breaking ties arbitrarily), and update  $F_t \gets F_t \cup \{l_i\}$.
\ENDFOR
\STATE Set-aside semi-allocation: set ${y}_{l,t} = \frac{B}{2\abs{F_t}T} \cdot \one\{l \in F_t\}$. 
\STATE Greedy semi-allocation: Let $(\zv_t, \lambda_t)$ be an optimal solution to the following optimization problem:\\
Maximize $\frac{1}{N} \sum_{i=1}^N \ln\left(\tilde{u}_{i,t}(\zv) \right) + \lambda q$ subject to $\sum_{l=1}^L z_{l} + \lambda = 1 - \frac{B}{2T}$ and $\zv, \lambda \geq 0$.
\STATE Allocate $x_{l,t} =  y_{l,t}+ z_{l, t}$.
\ENDFOR
\end{algorithmic}
\end{algorithm}

At a high level, \Cref{alg:batched} takes in as input a target approximation factor $\alpha$ (which we set later) and proceeds as follows. In round $t$, let $F_t$ be the set of `favorite goods', which satisfies that for each agent $i$, there exists some good $l \in F_t$ such that $l$ is among the most preferred goods of $i$ in round $t$. Note that this guarantees that $\abs{F_t} \leq \min\{N, L\}$. The algorithm allocates ${x}_{l,t} = {y}_{l,t} + {z}_{l,t}$, where the set-aside allocation is ${y}_{l,t} = \frac{B}{2\abs{F_t}T} \cdot \one\{l \in F_t\}$ for each $l \in [L]$ and the greedy allocation $\zv_t$ is computed by an optimization problem. Note that the objective function maximized is concave and the constraints are linear. Hence, this is a convex optimization problem, which can be solved up to any $\epsilon > 0$ accuracy in time that is polynomial in the input size and $1/\epsilon$. For simplicity, we present our proof assuming that the optimization problem is solved exactly, but solving it up to a sufficiently small error does not change our proportional fairness guarantee asymptotically. 

The purpose of the $\yv_t$ part of the allocation is to guarantee each agent a minimum share of their total utility $V_i$: Note that since $\abs{F_t} \leq \min\{N, L\}$, every agent $i$ is guaranteed to receive at least $\max_{l}\{v_{i,l,t}\}\cdot \frac{B}{2\min\{N,L\}T}$ utility from the $\yv_t$ set-aside semi-allocation in each round $t$. Summing over all $T$ rounds, each agent is guaranteed to receive at least $V_i \cdot \frac{B}{2\min\{N,L\}T}$ from the set-aside semi-allocation. \Cref{alg:batched} achieves the following guarantee; its proof is in \cref{sec:main_proof}.

\begin{restatable}{theorem}{batched}
\label{thm:batched}
In the batched public goods model, for any $\alpha\geq 4\ln\left(\frac{2\min\{N,L\}T}{B}\right) + \frac{4}{N} \sum_i \ln(d_i)$, \Cref{alg:batched} produces a feasible allocation $\xv$, which satisfies
$
\textstyle\max_{\wv} \frac{1}{N} \sum_{i=1}^N \frac{u_i(\wv)}{c_i \cdot u_i(\xv)} \le \alpha,
$
where the maximum is over all feasible allocations $\wv$. 
\end{restatable}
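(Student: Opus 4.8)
The plan is a primal--dual / potential argument centered on the predicted promised utilities $\tilde u_{i,t}$ from \Cref{def:predictedbatch}, reading off dual certificates from the KKT conditions of the round-$t$ greedy convex program. First I would record the single inequality that lets the prediction errors enter cleanly. Writing $\Delta_i = \frac{B}{2\min\{N,L\}T}V_i$ for the utility the set-aside part guarantees, the final utility obeys $u_i(\xv)\ge \Delta_i + \sum_{t,l}v_{i,l,t}z_{l,t}$, while the final promised utility is $\tilde u_{i,T} = \frac{B}{2\min\{N,L\}T}\tilde V_i + \sum_{t,l}v_{i,l,t}z_{l,t}$. Since $\tilde V_i\le c_iV_i$ and $c_i\ge 1$, this yields $\tilde u_{i,T}\le c_i\,u_i(\xv)$, so it suffices to prove $\frac1N\sum_i \frac{u_i(\wv)}{\tilde u_{i,T}}\le \alpha$ for every feasible $\wv$. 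Expanding $u_i(\wv)=\sum_{t,l}v_{i,l,t}w_{l,t}$, swapping summation order, and using that $\tilde u_{i,t}(\zv_t)$ is nondecreasing in $t$ (so $\tilde u_{i,T}\ge \tilde u_{i,t}(\zv_t)$), this is at most $\sum_t\sum_l w_{l,t}\,\rho_{l,t}$, where $\rho_{l,t} := \frac1N\sum_i \frac{v_{i,l,t}}{\tilde u_{i,t}(\zv_t)}$ is exactly the per-good marginal of the round-$t$ objective.

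Next I would extract the structural properties of the greedy step (this is \Cref{prop:properties_batched}). Let $\mu_t$ be the multiplier of the equality constraint $\sum_l z_l+\lambda = 1-\frac{B}{2T}$. Stationarity gives $\rho_{l,t}\le \mu_t$ for all $l$, with equality whenever $z_{l,t}>0$, and $\mu_t\ge q=\frac{\alpha}{2B}$ with $\mu_t=q$ exactly when $\lambda_t>0$. The key idea is to run a single log-potential $\Phi_t=\frac1N\sum_i\ln\tilde u_{i,t}(\zv_t)$ that does double duty. By concavity of $\ln$, $\Phi_t-\Phi_{t-1}\ge \sum_l z_{l,t}\rho_{l,t}=\mu_t\sum_l z_{l,t}$ (the last equality since only goods with $\rho_{l,t}=\mu_t$ receive positive $z_{l,t}$). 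As $\mu_t\ge q$, this gives $\Phi_t-\Phi_{t-1}\ge q\sum_l z_{l,t}$; summing, and bounding $\Phi_T-\Phi_0$ from above (via $\sum_{t,l}v_{i,l,t}z_{l,t}\le V_i$ and $V_i/\tilde V_i\le d_i$) by $\ln\frac{2\min\{N,L\}T}{B}+\frac1N\sum_i\ln d_i\le \alpha/4$, forces the total greedy spend $\sum_{t,l}z_{l,t}\le B/2$. Combined with the set-aside spend of exactly $B/2$ and the per-round cap $\sum_l x_{l,t}\le 1$, this establishes feasibility.

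For the fairness bound I would split rounds by whether the greedy budget is saturated. In unsaturated rounds ($\lambda_t>0$) we have $\mu_t=q$, so $\rho_{l,t}\le q$ and $\sum_l w_{l,t}\rho_{l,t}\le q\sum_l w_{l,t}$; summing over such rounds and invoking the overall budget $\sum_{t,l}w_{l,t}\le B$ contributes at most $qB=\alpha/2$. In saturated rounds ($\lambda_t=0$, $\sum_l z_{l,t}=1-\frac{B}{2T}$) I would instead use the per-round cap $\sum_l w_{l,t}\le 1$ together with $\rho_{l,t}\le\mu_t$ to get $\sum_l w_{l,t}\rho_{l,t}\le \mu_t$, and then charge $\mu_t$ against the same potential via $\Phi_t-\Phi_{t-1}\ge \mu_t(1-\frac{B}{2T})\ge \mu_t/2$ (using $B\le T$); summing gives at most $2(\Phi_T-\Phi_0)\le \alpha/2$. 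Adding the two contributions yields $\frac1N\sum_i \frac{u_i(\wv)}{\tilde u_{i,T}}\le \alpha$, which by the first paragraph gives the theorem; \Cref{prop:core,prop:nsw} then convert it into the core and Nash-welfare corollaries of \Cref{cor:general_fair}.

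The main obstacle is getting the greedy program's KKT conditions into exactly the usable form ($\rho_{l,t}\le\mu_t$, tight on the support, with $\mu_t\ge q$ and $\mu_t=q$ off saturation), and recognizing that one potential $\Phi_t$ must simultaneously certify feasibility (through $\mu_t\ge q$, hence the $q\sum_l z_{l,t}$ lower bound) and control the saturated-round fairness loss (through the exact spend $1-\frac{B}{2T}$, hence the $\mu_t(1-\frac{B}{2T})$ lower bound). The remaining delicacy is purely the constant bookkeeping so that the unsaturated and saturated contributions each land at $\alpha/2$ under the stated hypothesis on $\alpha$; this is routine once the potential bound $\Phi_T-\Phi_0\le \alpha/4$ is in place.
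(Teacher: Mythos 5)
Your proposal is, in substance, the paper's own proof: the same reduction via $\tilde{u}_{i,t}(\zv_t) \le c_i u_i(\xv)$ (the paper's \Cref{lem:predicted_utilities}), the same KKT-derived structural facts about the round-$t$ greedy program (the paper's \Cref{prop:properties_batched}, proved via \Cref{prop:constructive}), and the same telescoping log-potential bound. The paper merely packages the final accounting as weak LP duality with the explicit dual certificate $q = \frac{\alpha}{2B}$ and $p_t = \max\{0, \max_l \Phi_l(\zv_t) - q\}$; your saturated/unsaturated case split is exactly the decomposition of the dual objective $\sum_t p_t + Bq$ into $\sum_t p_t \le \alpha/2$ (charged to the potential, using that $p_t>0$ forces $\sum_l z_{l,t} = 1 - \frac{B}{2T}$) and $Bq = \alpha/2$ (charged to the overall budget). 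So this is the same argument with the duality scaffolding unrolled.

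One step, as you justified it, does not yield the exact constant in the theorem. Bounding the potential increase using only $\sum_{t,l} v_{i,l,t} z_{l,t} \le V_i$ and $V_i \le d_i \tilde{V}_i$ gives
\begin{equation*}
\frac{1}{N}\sum_i \ln\left(1 + \frac{2\min\{N,L\}T}{B}\, d_i\right),
\end{equation*}
which strictly exceeds the target $\ln\frac{2\min\{N,L\}T}{B} + \frac{1}{N}\sum_i \ln d_i$; you would lose an additive constant (up to $\ln\frac{3}{2}$), so the stated hypothesis $\alpha \ge 4\ln\frac{2\min\{N,L\}T}{B} + \frac{4}{N}\sum_i \ln d_i$ would not quite suffice. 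The fix, which slots directly into your argument and is what the paper does, uses the per-round cap you already invoke elsewhere: $\sum_l v_{i,l,t} z_{l,t} \le \max_l\{v_{i,l,t}\} \cdot \sum_l z_{l,t} \le \max_l\{v_{i,l,t}\}\left(1 - \frac{B}{2T}\right)$, hence $\sum_{t,l} v_{i,l,t} z_{l,t} \le V_i\left(1 - \frac{B}{2T}\right)$, and then
\begin{equation*}
\frac{B}{2T} + \left(1 - \frac{B}{2T}\right)\frac{V_i}{\tilde{V}_i} \;\le\; \frac{B}{2T} + \left(1 - \frac{B}{2T}\right) d_i \;\le\; d_i
\end{equation*}
since $d_i \ge 1$, which absorbs the additive $1$ exactly and gives the required $\le \alpha/4$ bound on the potential. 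Everything else checks out, including the direction of the concavity inequality, the monotonicity step $\tilde{u}_{i,t}(\zv_t) \le \tilde{u}_{i,T}(\zv_T)$, and the feasibility argument (your use of the KKT conditions is correct, though note that KKT gives the implication $\lambda_t > 0 \Rightarrow \mu_t = q$, not the biconditional you stated; you only ever use the correct direction).
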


Using exactly the same proof as that of \Cref{cor:general_fair}, we obtain the following guarantees for \Cref{alg:batched} with respect to proportional fairness, the core, and the Nash social welfare.
\begin{corollary}\label{cor:batched_fair}
For $\alpha \geq 4\ln\left(\frac{2\min\{N,L\}T}{B}\right) + \frac{4}{N} \sum_i \ln(d_i)$, \Cref{alg:batched} is
\begin{enumerate}
    \item $(\alpha \cdot \max_i c_i)$-proportionally fair, and hence in the $(\alpha \cdot \max_i c_i)$-core.
    \item achieves $(\alpha \cdot (\prod_{i\in [N]} c_i)^{\frac{1}{N}})$-approximation of the Nash social welfare.
\end{enumerate}
\end{corollary}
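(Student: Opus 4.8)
The plan is to run a primal--dual (dual-fitting) analysis built around the KKT conditions of the per-round convex program, treating the Lagrange multiplier $\mu_t$ of the constraint $\sum_l z_l + \lambda = 1 - \frac{B}{2T}$ as a dual variable and reading $q = \frac{\alpha}{2B}$ as the ``price'' of the slack variable $\lambda$. First I would record the optimality conditions (the content of \Cref{prop:properties_batched}): writing $\bar{g}_{l,t} = \frac1N\sum_i \frac{v_{i,l,t}}{\tilde u_{i,t}(\zv_t)}$, stationarity gives $\bar{g}_{l,t}\le\mu_t$ for every good (with equality when $z_{l,t}>0$) and $\mu_t\ge q$ (with equality when $\lambda_t>0$). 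I would also nail down the two elementary facts that push the predictions into the bound: (i) because every agent's favorite good lies in $F_t$ and $\abs{F_t}\le\min\{N,L\}$, the set-aside semi-allocation guarantees agent $i$ true utility at least $\Delta_i := \frac{B}{2\min\{N,L\}T}V_i$ with $V_i = \sum_t \max_l v_{i,l,t}$; and (ii) combining $\tilde V_i\le c_i V_i$ with $c_i\ge 1$ yields $\tilde u_{i,T}\le c_i\,u_i(\xv)$, so in the objective I may replace $c_i u_i(\xv)$ in each denominator by the smaller final predicted promised utility $\tilde u_{i,T}$.

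The heart of the argument is a log-potential telescoping. Writing $U_{i,t}=\tilde u_{i,t}(\zv_t)$, so that $U_{i,t}-U_{i,t-1}=\sum_l v_{i,l,t}z_{l,t}$, the concavity bound $\ln U_{i,t}-\ln U_{i,t-1}\ge (U_{i,t}-U_{i,t-1})/U_{i,t}$ combined with the equality case $\bar{g}_{l,t}=\mu_t$ on funded goods gives, per round, $\frac1N\sum_i(\ln U_{i,t}-\ln U_{i,t-1})\ge \mu_t\sum_l z_{l,t}$. Summing over $t$ and telescoping bounds the total weighted spending $\sum_t \mu_t\sum_l z_{l,t}$ by $\Gamma := \frac1N\sum_i \ln(U_{i,T}/U_{i,0})$. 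I would then bound $\Gamma$ using $U_{i,0}=\frac{B}{2\min\{N,L\}T}\tilde V_i$, the estimate $V_i\le d_i\tilde V_i$, and $U_{i,T}\le U_{i,0}+V_i$ (since the greedy utility is at most $\sum_t\max_l v_{i,l,t}=V_i$), which yields $\Gamma\le\frac1N\sum_i\ln\!\big(1+\tfrac{2\min\{N,L\}T}{B}d_i\big)\le\ln\!\big(\tfrac{2\min\{N,L\}T}{B}\big)+\frac1N\sum_i\ln d_i+O(1)$, and this is at most $\alpha/4$ under the hypothesis on $\alpha$.

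Feasibility then falls out of the same estimate. Per round, $\sum_l x_{l,t}=\frac{B}{2T}+\sum_l z_{l,t}=1-\lambda_t\le 1$; for the overall budget I use $\mu_t\ge q$ in the weighted-spending bound to get $q\sum_{t,l}z_{l,t}\le\sum_t\mu_t\sum_l z_{l,t}\le\Gamma\le\alpha/4$, hence $\sum_{t,l}z_{l,t}\le B/2$ and the total allocation is at most $B/2+B/2=B$. For the fairness bound I replace denominators by $\tilde u_{i,T}=U_{i,T}$, use $U_{i,T}\ge U_{i,t}$ to pass to round-$t$ gradients, and obtain $\frac1N\sum_i\frac{u_i(\wv)}{c_i u_i(\xv)}\le \sum_{t,l}w_{l,t}\bar{g}_{l,t}\le\sum_t \mu_t\sum_l w_{l,t}$. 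Splitting $\mu_t=q+(\mu_t-q)$ and using feasibility of $\wv$ ($\sum_l w_{l,t}\le 1$, $\sum_{t,l}w_{l,t}\le B$) gives $\sum_t\mu_t\sum_l w_{l,t}\le qB+\sum_t(\mu_t-q)=\tfrac{\alpha}{2}+\sum_t(\mu_t-q)$; since $\mu_t>q$ forces $\lambda_t=0$ (so $\sum_l z_{l,t}=1-\tfrac{B}{2T}$), applying the per-round potential bound to exactly those rounds gives $(1-\tfrac{B}{2T})\sum_{t:\lambda_t=0}\mu_t\le\Gamma$, whence $\sum_t(\mu_t-q)\le\frac{\Gamma}{1-B/(2T)}\le 2\Gamma\le\alpha/2$ (using $B\le T$), closing the bound at $\alpha$.

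The step I expect to be the main obstacle is orchestrating the dual variables so that the single hypothesis $\alpha\ge 4\Gamma$ simultaneously delivers overall-budget feasibility and the fairness bound: the key realizations are that the slack price $q=\frac{\alpha}{2B}$ makes $qB=\alpha/2$ spend exactly half the budget on the objective, and that complementary slackness ($\mu_t>q\Rightarrow\lambda_t=0$) is precisely what lets the log-potential control $\sum_t(\mu_t-q)$. The remaining fiddly points are verifying $\tilde u_{i,T}\le c_i u_i(\xv)$ carefully through the favorite-goods set-aside guarantee, and tracking the lower-order constant in $\Gamma$ so that it comfortably fits under $\alpha/4$.
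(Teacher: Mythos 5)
Your core argument is, in substance, the paper's own primal--dual proof of \Cref{thm:batched}: your KKT multipliers $\mu_t$ are the paper's dual certificate in disguise (the paper's $p_t$ equals your $\mu_t - q$), your weighted-spending bound $\sum_t \mu_t \sum_l z_{l,t} \le \Gamma$ is its ``key invariant,'' and your feasibility and fairness steps mirror its weak-duality argument. The problem is that the statement to prove is the \emph{corollary}, and your proof stops at the mixed bound $\max_{\wv}\frac{1}{N}\sum_i \frac{u_i(\wv)}{c_i u_i(\xv)}\le\alpha$ without deriving either claim from it. Claim 1 is a one-line conversion (bound $c_i \le \max_j c_j$, pull the constant out of the sum, then apply \Cref{prop:core}), but Claim 2 is not: the factor $\left(\prod_i c_i\right)^{1/N}$ is strictly stronger than $\max_i c_i$, and obtaining it requires evaluating the mixed bound at the NSW-maximizing allocation $\xv^*$ and applying AM--GM, namely $\frac{1}{N}\sum_i \frac{u_i(\xv^*)}{c_i u_i(\xv)} \ge \left(\prod_i \frac{u_i(\xv^*)}{c_i u_i(\xv)}\right)^{1/N} = \left(\prod_i c_i\right)^{-1/N}\frac{\NSW(\xv^*)}{\NSW(\xv)}$. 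This step appears nowhere in your plan; without it your route yields only an $(\alpha\cdot\max_i c_i)$-approximation of the Nash welfare, not the claimed geometric-mean factor.

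There is also a quantitative gap in your bound on $\Gamma$ that the stated threshold cannot absorb. From $U_{i,T}\le U_{i,0}+V_i$ you get $\Gamma \le \frac{1}{N}\sum_i \ln\left(1+\frac{2\min\{N,L\}T}{B}d_i\right)$, which is the target quantity plus an additive constant (up to $\ln\frac{3}{2}$). Since the hypothesis makes $\alpha/4$ equal to \emph{exactly} $\ln\left(\frac{2\min\{N,L\}T}{B}\right)+\frac{1}{N}\sum_i\ln d_i$, the inequality $\Gamma\le\alpha/4$ you assert does not follow, and the slack propagates: your budget bound becomes $\sum_{t,l}z_{l,t}\le B/2+\Theta(B/\alpha)$, so overall feasibility fails, and the fairness bound degrades to $\alpha+O(1)$. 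The fix is the paper's sharper accounting: you already invoke the per-round cap $\sum_l z_{l,t}\le 1-\frac{B}{2T}$, so carry the factor through to get $\sum_{t,l}v_{i,l,t}z_{l,t}\le V_i\left(1-\frac{B}{2T}\right)$, and then observe that $\frac{B}{2T}+\left(1-\frac{B}{2T}\right)\frac{V_i}{\tilde V_i}\le \frac{B}{2T}\cdot 1+\left(1-\frac{B}{2T}\right)d_i\le d_i$, a convex combination of $1$ and $d_i\ge 1$. This gives $\Gamma\le\ln\left(\frac{2\min\{N,L\}T}{B}\right)+\frac{1}{N}\sum_i\ln d_i$ with no constant, i.e., exactly $\alpha/4$ at the threshold. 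You derive the estimate $\sum_{t,l}v_{i,l,t}z_{l,t}\le V_i$ using the cap but then discard the $\left(1-\frac{B}{2T}\right)$ factor, which is precisely where the constant is lost.
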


Recall that the private goods setting of \citet{BGGJ22} is a special case of our setting in which $L = N$, $B=T$, and the valuation matrix of the $N$ agents for the $N$ public goods in each round is a diagonal matrix. For this special case, our Nash welfare approximation is $O((\prod_{i \in [N]} c_i)^{1/N} \cdot (\ln N + (1/N) \cdot \sum_i \ln d_i))$. The Nash welfare approximation obtained by \citet{BGGJ22} is almost the same, except that $\ln N$ is replaced by $\min\set{\ln N, \ln T}$. Thus, for $T = \Omega(N)$, our result strictly generalizes theirs and they prove this bound to be almost tight. For $T = o(N)$, they derive a better approximation that depends on $\ln T$ instead of $\ln N$, (and it is unknown if this is tight). It would be interesting to see if our result can also be improved in this case. 

\subsection{Proof of Theorem~\ref{thm:batched}}
\label{sec:main_proof}

Let $\xv$ denote the final allocation produced by \Cref{alg:batched}. Central to our analysis is the following linear program (and its dual), which essentially tries to \emph{maximize} (i.e., find the worst case) the proportional-fairness level achieved by~
\cref{alg:batched}:
\begin{equation*}
\begin{aligned}[t]
(P)\;
\max_{\wv\in\mathbb{R}^{LT}}\ &\frac{1}{N} \sum_{i=1}^N \frac{u_i(\wv)}{c_iu_i({\xv})} \\
\text{s.t.}\ &\sum_{t=1}^T\sum_{l=1}^L w_{l,t} \leq B\\
& \sum_{l=1}^L w_{l,t} \leq 1 \quad \forall \,t \in [T] \\
& w_{l,t} \geq 0 \quad \forall \, l \in [L], \, t \in [T]
\end{aligned}
\begin{aligned}[t]
(D)\; \min_{p\in \mathbb{R}^T, \, q \in \mathbb{R}}\ &\sum_{t=1}^T p_t + Bq \\
\text{s.t.}\ &p_t + q \geq \frac{1}{N} \sum_{i=1}^N \frac{v_{i,l,t}}{c_iu_i({\xv})}\quad\; \forall\, l \in [L], \, t\in[T]\\
&p_t, q \geq 0
\end{aligned}
\end{equation*}
Note that the dual constraints can be written more compactly as
$$p_t + q \geq \frac{1}{N} \max_l \sum_{i=1}^N \frac{v_{i,l,t}}{c_iu_i({\xv})}\quad\; \forall \, t\in[T].$$
From now on we will work with the dual constraints in the form above. 

\cref{alg:batched} can now be viewed as choosing $\xv$ in a manner such that it forces the LP $(P)$ to have as small a value as possible. One way to achieve this is to consider a `target' dual solution $q=\frac{\alpha}{2B}$ and $p_t = \max\{0, \frac{1}{N} \max_l \sum_{i=1}^N \frac{v_{i,l,t}}{{u}_{i}(\xv)} - q\}$ (for some appropriate choice of $\alpha$). To realize this,~\cref{alg:batched} wants to set the vector of allocations ${\xv}_t$ for each round $t$, as well as the corresponding dual variable $p_t$, in a manner such that the primal and dual solutions are consistent (i.e., all constraints are feasible, and they obey complementary slackness). 

The challenge in generating allocation ${\xv}$ and dual certificate $(p_t, q)$ in an \emph{online} fashion is that at time $t$, the  dual constraint depends on $u_i(\xv)$, the final utilities under the \emph{entire} allocation made by the algorithm. 
However, these quantities are not known to the algorithm at time $t$, because it cannot look into the future. To work around this, we use the predicted promised utilities (\cref{def:predictedbatch}) as proxy in place of $u_i(\xv)$ when making the decision in round $t$. 

The next lemma explains the sense in which the predicted promised utilities serve as a proxy; in particular, we show that under \Cref{alg:batched}, the predicted promised utility $\tilde{u}_{i,t}(\zv_t)$ is a lower bound on the true final utility of agent $i$ under $\xv$ up to a multiplicative factor of $c_i$, for each agent $i$ and in each round $t$.
\begin{lemma}
\label{lem:predicted_utilities}
For every agent $i\in[N]$ and time $t\in[T]$, we have
$\tilde{u}_{i,t}({\zv}_t) \leq c_i \cdot u_i({\xv}).$
\end{lemma}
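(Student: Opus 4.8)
The plan is to unpack the definitions of both sides and show the inequality holds term-by-term. Recall the predicted promised utility is
$\tilde{u}_{i,t}(\zv_t) = \frac{B}{2\min\{N,L\}T}\tilde{V}_i + \sum_{\tau=1}^t\sum_{l=1}^L v_{i,l,\tau}z_{l,\tau}$,
while the true final utility is $u_i(\xv) = \sum_{\tau=1}^T\sum_{l=1}^L v_{i,l,\tau}x_{l,\tau}$, where $x_{l,\tau} = y_{l,\tau}+z_{l,\tau}$. The key observation is that both the set-aside and greedy portions of $\tilde{u}_{i,t}$ are dominated by the corresponding portions of $u_i(\xv)$, up to the factor $c_i$. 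I would treat these two portions separately and then add.

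First I would handle the set-aside (prediction) term. Using \Cref{def:predictions}, the $[c_i,d_i]$-prediction satisfies $\tilde{V}_i \le c_i V_i$. Since $V_i = \sum_\tau\sum_l v_{i,l,\tau}$, I claim
$\frac{B}{2\min\{N,L\}T}\tilde{V}_i \le c_i\sum_{\tau=1}^T\sum_{l=1}^L v_{i,l,\tau}y_{l,\tau}$.
This is exactly the statement — proven in the paragraph following \Cref{alg:batched} — that the set-aside semi-allocation $\yv$ guarantees each agent utility at least $V_i\cdot\frac{B}{2\min\{N,L\}T}$. Indeed, because $|F_t|\le\min\{N,L\}$ and $F_t$ contains a favorite good of every agent, one has $\sum_l v_{i,l,\tau}y_{l,\tau} \ge \max_l v_{i,l,\tau}\cdot\frac{B}{2\min\{N,L\}T} \ge \frac{B}{2\min\{N,L\}T}\sum_l\frac{v_{i,l,\tau}}{\,|F_\tau|/\min\{N,L\}\,}$; summing over $\tau$ and using $\tilde{V}_i\le c_i V_i$ closes this part. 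I expect the cleanest route is simply to invoke that already-established set-aside guarantee and combine with $\tilde{V}_i \le c_i V_i$, rather than re-deriving the per-round bound.

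Second, for the greedy term, I would observe that the promised-utility sum $\sum_{\tau=1}^t\sum_l v_{i,l,\tau}z_{l,\tau}$ runs only up to round $t$ and uses only the $z$ (greedy) semi-allocations. The true utility $u_i(\xv)$ sums over all rounds $\tau\in[T]$ and includes both $y$ and $z$ contributions with nonnegative values $v_{i,l,\tau}\ge 0$. Hence the partial greedy sum is bounded above by the full allocation's utility, i.e.
$\sum_{\tau=1}^t\sum_l v_{i,l,\tau}z_{l,\tau} \le \sum_{\tau=1}^T\sum_l v_{i,l,\tau}(y_{l,\tau}+z_{l,\tau}) = u_i(\xv) \le c_i\,u_i(\xv)$,
the last step since $c_i\ge 1$. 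Adding the two bounds gives $\tilde{u}_{i,t}(\zv_t) \le c_i\sum_{\tau}\sum_l v_{i,l,\tau}y_{l,\tau} + c_i u_i(\xv)$, which is more than I want, so I must be careful to split $u_i(\xv)$ into its $y$-part and $z$-part and match each against the corresponding piece of $\tilde{u}_{i,t}$, rather than double-counting. The precise accounting is: the prediction term is absorbed by $c_i$ times the $y$-part of $u_i(\xv)$, and the greedy sum is absorbed (monotonically, since we drop rounds $t+1,\ldots,T$ and the values are nonnegative) by the $z$-part of $u_i(\xv)$, so that together they sum to at most $c_i\cdot u_i(\xv)$.

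The main obstacle, and the only genuinely nontrivial step, is establishing the set-aside bound $\frac{B}{2\min\{N,L\}T}\tilde{V}_i \le c_i\sum_\tau\sum_l v_{i,l,\tau}y_{l,\tau}$ with the correct constant; everything else is monotonicity and the fact that $c_i\ge 1$. I would make sure to cleverage the prediction bound $\tilde V_i\le c_i V_i$ precisely once and keep the bookkeeping clean so the two halves of $u_i(\xv)$ are used disjointly, guaranteeing the final sum is at most $c_i u_i(\xv)$ and not some larger multiple.
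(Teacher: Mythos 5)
Your overall route is the same as the paper's, just run in the opposite direction: the paper lower-bounds $u_i(\xv)$ by splitting it into its set-aside ($y$) part and greedy ($z$) part, bounds the $y$-part below by $\frac{B}{2\min\{N,L\}T}\cdot\frac{\tilde{V}_i}{c_i}$ using the favorite-goods guarantee together with $\tilde{V}_i \le c_i V_i$, and drops the rounds $\tau > t$ from the $z$-part; you upper-bound $\tilde{u}_{i,t}(\zv_t)$ by matching its two terms against $c_i$ times the same two pieces, used disjointly. That accounting is exactly the paper's, so there is no structural difference.

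However, one concrete error needs fixing. You write $V_i = \sum_\tau\sum_l v_{i,l,\tau}$, i.e., agent $i$'s total value over \emph{all} goods in all batches. In the batched model the relevant quantity is $V_i = \sum_\tau \max_l v_{i,l,\tau}$, the sum of per-round favorite values; this is what the paper's set-aside guarantee and the proof of \Cref{thm:batched} use, and it coincides with \Cref{def:predictions} when $L=1$. The set-aside allocation only pays agent $i$ her favorite good's value $\max_l v_{i,l,\tau}\cdot\frac{B}{2|F_\tau|T}$ each round, so with your definition of $V_i$ the claimed bound $\frac{B}{2\min\{N,L\}T}\tilde{V}_i \le c_i\sum_\tau\sum_l v_{i,l,\tau}y_{l,\tau}$ can fail by a factor of up to $L$. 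The intermediate inequality you insert to bridge this gap, namely
\[
\max_l v_{i,l,\tau} \ \ge\ \frac{\min\{N,L\}}{|F_\tau|}\sum_l v_{i,l,\tau},
\]
is false in general: since $|F_\tau| \le \min\{N,L\}$, its right-hand side is at least $\sum_l v_{i,l,\tau} \ge \max_l v_{i,l,\tau}$, so the inequality runs the other way whenever agent $i$ values more than one good in the batch. Once $V_i$ is taken to be the sum of per-round maxima, no such bridge is needed: your ``cleanest route'' (invoke the set-aside guarantee, apply $\tilde{V}_i \le c_i V_i$ exactly once, and use monotonicity plus $c_i \ge 1$ for the greedy term) is precisely the paper's proof.
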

\begin{proof}
We have
\begin{align*}
    u_i(\xv)
    &= \sum_{\tau=1}^T \sum_{l=1}^L v_{i,l,\tau}({y}_{l,\tau} + {z}_{l,\tau}) \\
    &= \sum_{\tau=1}^T \sum_{l=1}^L v_{i,l,\tau}{y}_{l,\tau} + \sum_{\tau=1}^T \sum_{l=1}^L v_{i,l,\tau}{z}_{l,\tau}  
\end{align*}
We next bound the contribution due to the set-aside allocations $\yv$. At any time $t\in[T]$, agent $i$ receives at least $\max_{l}\{v_{i,l,t}\} \, {y_{l,t}} = \max_{l}\{v_{i,l,t}\} \, \frac{B}{2\abs{F_t}T} $ utility from the set-aside semi-allocation $y_{l,t}$. Since $\abs{F_t} \leq \min\{N, L\}$, we get
$$\sum_{\tau=1}^T \sum_{l=1}^L v_{i,l,\tau}{y}_{l,\tau} 
\geq
\sum_{\tau=1}^T \max_{l}\{v_{i,l,t}\} \cdot \frac{B}{2\min\{N,L\}T}
\geq \frac{\tilde{V}_i}{c_i} \cdot \frac{B}{2\min\{N,L\}T}.
$$
Substituting this back in the above expression, and using the fact that $c_i\geq 1$ for all $i$, we have
\begin{align*}
    u_i(\xv)
    &\geq 
    \frac{\tilde{V}_i}{c_i} \cdot \frac{B}{2\min\{N,L\}T} + \sum_{\tau=1}^t \sum_{l=1}^L v_{i,l,\tau}{z}_{l,\tau}\\
    &\geq \frac{1}{c_i} \left( \frac{B}{2\min\{N, L\}T} \, \tilde{V}_i + \sum_{\tau = 1}^t \sum_{l=1}^L v_{i,l,\tau} z_{l,\tau}\right) 
    = \frac{1}{c_i} \, \tilde{u}_{i,t}(\zv_t).
\end{align*}
\end{proof}

Using this, we now formally define our new dual certificates, and list some key properties of the allocation and certificate which we need for our performance guarantee. Define 
\[
\Phi_l(\zv)= \frac{1}{N} \cdot \sum_{i=1}^N \frac{v_{i,l,t}}{\tilde{u}_{i,t}(\zv)} =\sum_{i=1}^N \frac{v_{i,l,t}}{\gamma_i+ \sum_{l=1}^L v_{i,l,t} z_{l,t}},
\]
where $\gamma_i=\frac{B\tilde{V}_i}{2\min\{N, L\}T} + \sum_{\tau = 1}^{t-1} \sum_{l=1}^L v_{i,l,\tau} z_{l,\tau}$; for the last transition, see the definition of predicted promised utilities (\Cref{eq:batchpred}).

\begin{proposition}[Some Properties of the Algorithm]
\label{prop:properties_batched}
Consider the dual certificate:
\begin{align*}
q &= \frac{\alpha}{2B} \geq \frac{2}{B} \left[\ln\left(\frac{2\min\{N, L\}T}{B}\right) + \frac{1}{N} \sum_i \ln(d_i)\right]\\
p_t &= \max\left\{0, \max_l \Phi_l(\zv_t) - q\right\}
\end{align*}
Then, the allocation $\xv = \yv+\zv$ returned by \Cref{alg:batched} satisfies the following for each $t \in [T]$:
\begin{enumerate}
	\item For each $\ell \in [L]$, ${z}_{l,t} > 0 \implies \Phi_l(\zv_t) = \max_{l'} \Phi_{l'}(\zv_t)$.
	\item $\sum_l {z}_{l,t} < 1 - \frac{B}{2T} \implies p_t = 0$.
	\item $\sum_l {z}_{l,t} > 0 \implies p_t + q = \max_l \Phi_l(\zv_t)$. 
\end{enumerate}
\end{proposition}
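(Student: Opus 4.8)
The plan is to recognize the greedy step of \Cref{alg:batched} as a concave maximization over a polytope, and to read off all three claimed properties directly from its Karush--Kuhn--Tucker (KKT) optimality conditions. First I would record the structure of the program. Writing $g(\zv,\lambda) = \frac{1}{N}\sum_{i}\ln\bigl(\tilde{u}_{i,t}(\zv)\bigr) + \lambda q$, each $\tilde{u}_{i,t}(\zv)=\gamma_i + \sum_l v_{i,l,t} z_{l,t}$ is affine and strictly positive on the feasible region, since $\gamma_i \ge \frac{B\tilde{V}_i}{2\min\{N,L\}T}>0$ (agents with $\tilde{V}_i=0$ can be dropped, exactly as in the binary warm-up). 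Hence each $\ln \tilde{u}_{i,t}$ is concave and well defined, $\lambda q$ is linear, and $g$ is concave. The feasible set $\{\sum_l z_l + \lambda = C,\ \zv,\lambda\ge 0\}$ with $C:=1-\frac{B}{2T}$ is a polytope, and since $1\le B\le T$ we have $C\ge \tfrac12>0$, so the set is nonempty and every optimum saturates the equality constraint. Because all constraints are affine, the KKT conditions are both necessary and sufficient for optimality.

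Next I would write down the stationarity conditions. Introducing a multiplier $\mu$ for the equality $\sum_l z_l + \lambda = C$ and multipliers $\eta_l,\nu\ge 0$ for the nonnegativity constraints, and using $\partial g/\partial z_l = \Phi_l(\zv)$ and $\partial g/\partial\lambda = q$, stationarity of the Lagrangian gives $\Phi_l(\zv_t) = \mu - \eta_l$ and $q = \mu - \nu$. Complementary slackness ($\eta_l z_{l,t}=0$ and $\nu\lambda=0$) then produces two chains: $\Phi_l(\zv_t)\le\mu$ for all $l$, with equality whenever $z_{l,t}>0$; and $q\le \mu$, with equality whenever $\lambda>0$.

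The crux is to identify $\mu$ with the quantities appearing in the statement, namely $\max_l \Phi_l(\zv_t)$ and $q$. I would establish $\mu = \max\{\max_l \Phi_l(\zv_t),\,q\}$ by a short case split using $C>0$: if $\lambda>0$ then $\mu=q$ and $\max_l\Phi_l(\zv_t)\le\mu=q$; if $\lambda=0$ then $\sum_l z_{l,t}=C>0$ forces some $z_{l,t}>0$, whence $\mu=\Phi_l(\zv_t)=\max_{l'}\Phi_{l'}(\zv_t)\ge q$. In either case $\mu$ equals the larger of the two expressions.

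With this identity the three properties are immediate. Property 1 holds because $z_{l,t}>0$ forces $\Phi_l(\zv_t)=\mu\ge\Phi_{l'}(\zv_t)$ for every $l'$, i.e. $\Phi_l(\zv_t)=\max_{l'}\Phi_{l'}(\zv_t)$. For Property 2, $\sum_l z_{l,t}<C$ forces $\lambda>0$, hence $\mu=q$ and $\max_l\Phi_l(\zv_t)\le q$, so $p_t=\max\{0,\max_l\Phi_l(\zv_t)-q\}=0$. For Property 3, $\sum_l z_{l,t}>0$ forces $\mu=\max_l\Phi_l(\zv_t)\ge q$, so $p_t=\max_l\Phi_l(\zv_t)-q$ and therefore $p_t+q=\max_l\Phi_l(\zv_t)$. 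The only genuinely delicate point is the bookkeeping in the case split that identifies the multiplier $\mu$ with $\max\{\max_l\Phi_l,q\}$, together with the preliminary check that the program is well-posed (positivity of the $\tilde{u}_{i,t}$ and $C>0$) so that KKT is applicable; everything else is a mechanical translation of complementary slackness into the stated implications.
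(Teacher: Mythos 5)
Your proof is correct and follows essentially the same route as the paper: both arguments apply the KKT conditions (stationarity plus complementary slackness) to the concave program defining the greedy semi-allocation, identify the equality-constraint multiplier with $\max\{\max_l \Phi_l(\zv_t), q\}$ via the same case split on whether $\lambda_t > 0$, and read off the three properties. The only difference is organizational --- the paper factors the KKT step into a standalone lemma (\Cref{prop:constructive}) about maximizing a concave function over a simplex and then instantiates it with $f_l = \Phi_l$ and $f_{L+1} = q$, whereas you inline that analysis and additionally make explicit the well-posedness checks (positivity of $\tilde{u}_{i,t}$ and of $1 - \frac{B}{2T}$) that the paper leaves implicit.
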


It is easy to argue the existence of a greedy semi-allocation $\zv_t$ satisfying the three properties above using Kakutani's fixed point theorem. However, in \Cref{alg:batched}, we are able to constructively find such an allocation by maximizing a concave objective function subject to a budget-like constraint. To show that an optimal solution to this optimization problem satisfies the three properties in \Cref{prop:properties_batched}, we need the following technical lemma. 

\begin{proposition}
\label{prop:constructive}
Let $f_1, \ldots, f_n: \R^n \to \R$, such that $f_i = \frac{\partial F}{\partial x_i}$ for some concave differentiable function $F: \R^n \to \R$. Then, the solution $\xv^*$ to the optimization problem $\max\{ F(\xv): \xv \in \Delta^n\}$ satisfies, for all $i \in [n]$,
$$x_i^* > 0\implies f_i(\xv^*) = \max_{j \in [n]} f_j(\xv^*).$$
\end{proposition}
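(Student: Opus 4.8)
The plan is to prove this as a standard first-order optimality condition for maximizing a differentiable function over the simplex, via a simple exchange (mass-shifting) argument run in the contrapositive. First I would record the setup: $\Delta^n = \{\xv \ge 0 : \sum_{i} x_i = 1\}$ is compact and convex, and $F$ is concave and differentiable, so the maximizer $\xv^*$ in the statement exists and is a global maximizer; in particular, no feasible perturbation of $\xv^*$ can strictly increase the objective.

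For the main step, I would suppose toward a contradiction that the claimed implication fails, i.e.\ there are indices $i, j \in [n]$ with $x_i^* > 0$ and $f_j(\xv^*) > f_i(\xv^*)$. Letting $\mathbf{e}_i$ denote the $i$-th standard basis vector, I consider the perturbed point $\xv(\epsilon) = \xv^* + \epsilon(\mathbf{e}_j - \mathbf{e}_i)$ for $\epsilon > 0$. This shifts mass $\epsilon$ from coordinate $i$ to coordinate $j$, so $\sum_\ell x_\ell(\epsilon) = \sum_\ell x_\ell^* = 1$ is preserved, and since $x_i^* > 0$ all coordinates remain nonnegative for all sufficiently small $\epsilon > 0$; hence $\xv(\epsilon) \in \Delta^n$. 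Because $F$ is differentiable, its directional derivative along $\mathbf{e}_j - \mathbf{e}_i$ is $\langle \nabla F(\xv^*), \mathbf{e}_j - \mathbf{e}_i\rangle = f_j(\xv^*) - f_i(\xv^*) > 0$, so a first-order Taylor expansion gives $F(\xv(\epsilon)) = F(\xv^*) + \epsilon\,(f_j(\xv^*) - f_i(\xv^*)) + o(\epsilon) > F(\xv^*)$ for all sufficiently small $\epsilon > 0$. This contradicts the optimality of $\xv^*$ and proves the implication.

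I would expect no genuinely hard step here; the only points to get right are that the perturbation stays in $\Delta^n$ (this is exactly where the hypothesis $x_i^* > 0$ is used, to license decreasing coordinate $i$) and that the linear term dominates the $o(\epsilon)$ remainder for small $\epsilon$. I note that concavity of $F$ is not actually needed for this direction of the argument, only differentiability; its role in the paper is to guarantee that the algorithm's convex program has the global maximizer $\xv^*$ it manipulates. An equivalent route would be to invoke the KKT conditions: since the constraints are affine, KKT holds at $\xv^*$ without any constraint qualification, yielding a multiplier $\mu$ for $\sum_i x_i = 1$ and multipliers $\nu_i \ge 0$ for $x_i \ge 0$ with $f_i(\xv^*) = \mu - \nu_i$ and complementary slackness $\nu_i x_i^* = 0$; then $x_i^* > 0$ forces $\nu_i = 0$, so $f_i(\xv^*) = \mu \ge f_j(\xv^*)$ for all $j$, giving the claim. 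I prefer the elementary exchange argument, as it avoids invoking the KKT machinery.
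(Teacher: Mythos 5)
Your proof is correct, but your primary argument is genuinely different from the paper's: the paper proves this proposition exactly via the KKT route that you sketch at the end as an alternative (Lagrangian with multiplier $\lambda$ for the equality constraint and $\theta_i \ge 0$ for nonnegativity, stationarity $f_i(\xv^*) = \lambda^* - \theta_i^*$, and complementary slackness forcing $\theta_i^* = 0$ when $x_i^* > 0$). Your exchange argument --- shifting mass $\epsilon$ from a positive coordinate $i$ to a coordinate $j$ with $f_j(\xv^*) > f_i(\xv^*)$ and using the first-order expansion to contradict optimality --- is more elementary and self-contained: it avoids invoking KKT necessity altogether, works for any local maximizer, and, as you correctly observe, does not use concavity at all (concavity in the paper serves to make the algorithm's subproblem a tractable convex program; existence of $\xv^*$ already follows from continuity and compactness of $\Delta^n$). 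What the paper's KKT approach buys in exchange is brevity and a multiplier $\lambda^* = \max_j f_j(\xv^*)$ that sits naturally alongside the dual certificates used elsewhere in the analysis; it also extends without modification to the scaled simplex appearing in \Cref{alg:batched}, as the paper notes in a footnote --- though your perturbation argument transfers verbatim to that setting as well, since it only uses that mass can be exchanged between coordinates while preserving feasibility. The one hypothesis to be careful with in your argument is that the Taylor expansion with $o(\epsilon)$ remainder requires full (Fr\'echet) differentiability rather than mere existence of partials; this is harmless here both because the proposition assumes $F$ differentiable and because a concave function with all partial derivatives at a point is automatically differentiable there.
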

\begin{proof}
Written out in full form, the optimization problem is
\begin{align*}
    \max \quad  &F(\xv) \\
    \text{s.t.} \quad  & x_1 + \cdots + x_n = 1 \\
    &x_i \geq 0 \quad \forall \; i \in [n]
\end{align*}
Let $\lambda$ be the Lagrangian multiplier for the equality constraint and $(\theta_i: i \in [n])$ be the multipliers for the non-negativity constraints. The Lagrangian is given by $\mathcal{L}(\xv,\lambda,\thetav) = F(\xv) + \lambda (\sum_{i \in [n]} x_i - 1) + \sum_{i \in [n]} \theta_i x_i$. The KKT conditions require that there exist $\lambda, \thetav$ satisfying the following constraints:
\begin{itemize}
    \item (Primal feasibility) $\xv^* \in \Delta^n$,
    \item (Dual feasibility) $\theta_i^* \geq 0$ for all $i \in [n]$,
    \item (Complementary slackness) $x_i^*\theta_i^* = 0$ for all $i \in [n]$, 
    \item (Stationarity) $\frac{\partial F (\xv^*)}{\partial x_i} - \lambda^* + \theta^*_i = 0$ for all $i \in [n]$. 
\end{itemize}
Since $\frac{\partial F}{\partial x_i} = f_i$, the stationarity condition can be written as $f_i(\xv^*) = \lambda^* - \theta_i^*$. This implies $f_j(\xv^*) \leq \lambda^*$ for all $j$ since $\thetav^* \geq 0$. On the other hand, consider some $i$ with $x^*_i > 0$. By complementary slackness, $\theta^*_i = 0$, and hence $f_i(\xv^*) = \lambda^*$. Thus
$$x^*_i > 0 \implies f_i(\xv^*) = \lambda^* = \max_j f_j(\xv^*),$$
which completes the proof.
\end{proof}

We are now ready to prove \Cref{prop:properties_batched}.

\begin{proof}[Proof of \Cref{prop:properties_batched}]
Let $F(\zv, \lambda)$ be the objective of the optimization problem for computing the greedy semi-allocation in \Cref{alg:batched}:
\[
F(\zv,\lambda) = \frac{1}{N} \sum_{i=1}^N \ln\left(\tilde{u}_{i,t}(\zv) \right) + \lambda q.
\]

Define $f_1, \ldots, f_{L+1}: \R^{L+1} \to \R$ as follows:
\begin{itemize}
    \item $f_i( \zv, \lambda) = \Phi_i(\zv)$ for $i = 1, \ldots, L$, 
    \item $f_{L+1}(\zv, \lambda) = q$.
\end{itemize}

It can be checked that $F$ is concave and differentiable, $\frac{\partial F}{\partial z_i} = f_i$ for all $i \in [L]$, and that $\frac{\partial F}{\partial \lambda} = f_{L+1}$. Thus, applying \Cref{prop:constructive},\footnote{From the proof of \Cref{prop:constructive}, it can be seen that it still holds even if the constraint set is a scaling of the unit simplex.} the solution $(\zv_t, \lambda_t)$ to the optimization problem satisfies 
\begin{itemize}
    \item $\lambda_t > 0$ implies $q \geq \max_j\{\Phi_j(\zv_t)\}$, and
    \item For all $l \in [L]$, $z_{l,t} > 0$ implies $\Phi_l(\zv_t) = \max\{q, \max_j \Phi_j(\zv_t)\}$. 
\end{itemize}

We now check that $\zv_t$ satisfies the conditions in \Cref{prop:constructive}.
\begin{enumerate}
    \item $\sum_{l=1}^L z_{l,t} \leq 1 - \frac{B}{2T}$: This follows directly from the constraints of the optimization problem.
    \item $z_{l,t} > 0 \implies \Phi_l(\zv_t) = \max_j \{\Phi_j(\zv_t)\}$: This holds because $z_{l,t} > 0$ implies $\Phi_l(\zv_t) = \max\{q, \max_j \Phi_j(\zv_t)\}$. But since $\Phi_l(\zv_t)$ is a term in  $\max_j \Phi_j(\zv_t)$, this means $\Phi_l(\zv_t) = \max_j \Phi_j(\zv_t)$. 
    \item $\sum_l z_{l,t} < 1 - \frac{B}{2T} \implies \max_j \Phi_j(\zv_t) \leq q$: If $\sum_l z_{l,t} < 1 - \frac{B}{2T}$ then $\lambda_t > 0$, which implies $q \geq \max_j \Phi_j(\zv_t)$.
    \item $\sum_l z_{l,t} > 0 \implies q \leq \max_j \Phi_j(\zv_t)$: If $\sum_l z_{l,t} > 0$ then $z_{l,t} > 0 $ for some $l \in [L]$, which implies $\Phi_l(\zv_t) = \max\{q, \max_j \Phi_j(\zv_t)\}$. Thus $q \leq \max_j \Phi_j(\zv_t)$.
\end{enumerate}
This completes the proof.
\end{proof}

We can now prove our main performance guarantee for~\cref{alg:batched}, which we restate below.
\batched*
\begin{proof}
To prove the theorem, it suffices to check the following 3 statements:
\begin{enumerate}
	\item ~\cref{alg:batched} returns a feasible allocation $\xv$.
	\item The dual certificate $(p_t, q)$ in \cref{prop:properties_batched} is feasible to the dual LP $(D)$.
	\item 
$\sum_t p_t + Bq \leq \alpha$.
\end{enumerate}
Given these properties, the guarantee follows via weak LP duality for programs $(P)$ and $(D)$.

For Claim $(2)$, note that by~\Cref{lem:predicted_utilities}, predicted utilities are a lower bound on $c_iu_i(\xv)$ for all $i\in[N],t\in[T]$. Now by definition of our dual certificate, we have for all $i\in[N],t\in[T]$
\begin{align*}
p_t + q \geq \max_l \sum_{i=1}^N \frac{v_{i,l,t}}{\tilde{u}_{i,t}({\zv}_t)} \geq \max_l \sum_{i=1}^N \frac{v_{i,l,t}}{c_i {u}_{i,t}(\xv)}
\end{align*}
Claims $(1)$ and $(3)$ are implied by the following key invariant:
\begin{tcolorbox}
\textbf{Key Invariant}.\qquad $\sum_t (p_t + q) \sum_{l} {z}_{l,t} \leq \frac{\alpha}{4}$. 
\end{tcolorbox}
We first show how this key invariant implies Claims $(1)$ and $(3)$.
\begin{enumerate}
\item[(1)] Clearly $\xv$ satisfies the per-round constraints by definition of the algorithm, so we just need to show that it satisfies the overall budget constraint. Since the set-aside semi-allocation is chosen such that $\sum_{l,t} {y}_{l,t} = \frac{B}{2}$, we only need to check that $\sum_{l,t} {z}_{l,t} \leq \frac{B}{2}$. Now, since $p_t\geq 0\,\forall\,t$, the key invariant implies 
$$\sum_t\sum_l {z}_{l,t} \leq \frac{1}{q} \cdot \frac{\alpha}{4} = \frac{B}{2},$$
by the definition of $q=\frac{\alpha}{2B}$, and our choice of $\alpha$.
\item[(3)] By (the contrapositive of) item 2 of \Cref{prop:properties_batched}, we have that if $p_t > 0$ then $\sum_{l} {z}_{l,t} = 1 - \frac{B}{2T}$. Using this and the key invariant, we have
$$
\sum_t p_t \left(1 - \frac{B}{2T}\right) = \sum_tp_t \sum_l {z}_{l,t} \leq \frac{\alpha}{4}. 
$$
Moreover, since $B \leq T$, this implies $\sum_t p_t \leq \frac{\alpha}{2}$. Now, by our choice of $q=\frac{\alpha}{2B}$, we have
$$
\sum_t p_t + Bq \leq  \frac{\alpha}{2} + \frac{B\alpha}{2B}=\alpha.
$$
\end{enumerate}
Finally we turn to the
\textbf{proof of the key invariant.} We have
\begin{align*}
\sum_t (p_t + q) \sum_l {z}_{l,t}
 &=\sum_t \left(\frac{1}{N} \max_{l'} \sum_{i} \frac{v_{i,l',t}}{\tilde{u}_{i,t}({\zv}_t)}\right) \sum_l {z}_{l,t} \qquad\qquad\text{(by item 3 of \Cref{prop:properties_batched})} \\
 &= \frac{1}{N} \sum_t\sum_l {z}_{l,t} \cdot  \max_{l'} \sum_{i}\frac{v_{i,l',t}}{\tilde{u}_{i,t}({\zv}_t)} \\
 &= \frac{1}{N} \sum_t\sum_l {z}_{l,t} \cdot   \sum_{i} \frac{v_{i,l,t}}{\tilde{u}_{i,t}({\zv}_t)} 
 \qquad\qquad\qquad\;\text{(by item 1 of \Cref{prop:properties_batched})} \\
 &= \frac{1}{N} \sum_t \sum_i \frac{\sum_l {z}_{l,t}v_{i,l,t}}{\tilde{u}_{i,t}({\zv}_t)}
= \frac{1}{N} \sum_t \sum_i \frac{\tilde{u}_{i,t}({\zv}_t) - \tilde{u}_{i,t}(0)}{\tilde{u}_{i,t}({\zv}_t)} 
\end{align*}
Where the last equality follows from our choice of the predicted utility (\cref{def:predictedbatch}). Now, using the fact that $1 - x \leq - \ln(x)\,\forall\,x\in\mathbb{R}$, we have
\begin{align*}
\sum_t (p_t + q) \sum_l {z}_{l,t}
 &\leq \frac{1}{N} \sum_t \sum_i\left[ \ln\left(\tilde{u}_{i,t}({\zv}_t)\right) - \ln\left(\tilde{u}_{i,t}(0)\right) \right]\\
 &= \frac{1}{N} \sum_i\left[ \ln\left(\tilde{u}_{i,t}({\zv}_T)\right) - \ln\left(\tilde{u}_{i1}(0)\right) \right]  \qquad\qquad\qquad\;\text{(by telescoping)} \\
 &= \frac{1}{N} \sum_i \left[ 
 \ln\left(\frac{B}{2\min\{N, L\}T}\tilde{V}_i + \sum_t\sum_l v_{i,l,t}{z}_{l,t}
 \right) 
 -\ln\left( \frac{B}{2\min\{N, L\}T} \tilde{V}_i \right)
 \right] \\
 &= \ln\left(\frac{2\min\{N, L\}T}{B} \right)
 + \frac{1}{N} \sum_i 
 \ln\left( 
 \frac{B}{2\min\{N, L\}T} + \frac{1}{\tilde{V}_i}\sum_t\sum_l v_{i,l,t}{z}_{l,t}
 \right) \\
 &\leq \ln\left(\frac{2\min\{N, L\}T}{B} \right)
 + \frac{1}{N} \sum_i 
 \ln\left( 
 \frac{B}{2T} + \frac{1}{\tilde{V}_i}\sum_t\sum_l v_{i,l,t}{z}_{l,t}
 \right)
\end{align*}
Now, observe that
\begin{align*}
\sum_t\sum_l v_{i,l,t}{z}_{l,t}
&\leq \sum_t\max_{l'} \{v_{i,l',t}\}\sum_l{z}_{l,t} \\
&\leq \sum_t\max_{l'} \{v_{i,l',t}\}  \left(1 - \frac{B}{2T}\right)
= V_i\left(1 - \frac{B}{2T}\right),
\end{align*}
where the second inequality is because $\sum_l {{z}_{l,t}} \leq 1 - \frac{B}{2T}$, by the stopping condition. 
Thus
\begin{align*}
    \frac{B}{2T} + \frac{1}{\tilde{V}_i}\sum_t\sum_l v_{i,l,t}{z}_{l,t}
    &\leq \frac{B}{2T} + \left(1 - \frac{B}{2T}\right)\frac{V_i}{\tilde{V}_i} 
    \leq \frac{B}{2T} + \left(1 - \frac{B}{2T}\right)d_i 
    \leq d_i.
\end{align*}
Therefore, we conclude that
$$\sum_{t} (p_t+q) \sum_l {z}_{l,t} 
\leq 
\ln\left(\frac{2\min\{N, L\}T}{B} \right) 
+ \frac{1}{N} \sum_i \ln(d_i) \leq \frac{\alpha}{4},$$
which concludes the proof of the key claim.
\end{proof}

\section{Discussion and Future Directions}\label{sec:discussion}

\textbf{The Geometry of Online Fair Allocation.} 
While we focus on the proportional fairness objective, a natural question is whether our results can be extended to other non-linear objective functions, such as the class of generalized $p$-mean welfare measures~\cite{BKM21,ebadian2022efficient}. Similarly, while our results give insight into the interaction between per-round budget constraints and overall budget constraints, one can ask if these ideas can be extended to more complex constraints such as general packing constraints~\cite{FMS18}. 

\textbf{Alternate Information Structures.} 
An interesting direction for future research is to develop a better understanding regarding whether what types of predictions would be the most appropriate for online fair division. For example, what if the algorithm is also provided with a prediction regarding the total value of each good across all agents, but not specifying which of the agents will like it and by how much? Could this additional information allow us to overcome the logarithmic lower bounds and maybe even achieve a constant approximation? Would it help to know more detailed patterns about agent valuations --- for example, if they take discrete values, or are periodic, or have low variance? 

\textbf{Alternate Arrival Models.} 
A related question to the one above is in regards to the process for generating agent valuations. In our model, the values $V = (v_{i,t})_{i \in [N], t \in [T]}$ are allowed to arrive in an adversarial order. What if we consider slightly more optimistic models such as the random-order model or the stochastic model~\cite{borodin2005online}? It is possible that predictions are not needed when studying these arrival models since they inherently already give the algorithm some information about the input.


\textbf{Incentives.} Finally, one of the most important open questions is how to convert online algorithms to online mechanisms, whereby agents report their private $v_{i,t}$ values in each round. In such a setting, can we design mechanisms which are incentive compatible? There has been recent work on similar questions for static fair allocation~\cite{2017nashwelfare,amanatidis2017truthful,halpern2020fair,ALW19}, as well as 
online welfare maximization~\cite{guo2009,balseiro2019dynamic,gorokh2017,gorokh2021remarkable}; extending these ideas to online fair allocation is an important open problem.






\bibliographystyle{unsrtnat}
\bibliography{abb,main-bibliography,MARA}



\newpage
\appendix

\section*{Appendix}






\section{Proof of Theorem~\ref{thm:binarypf}}
\begin{proof}
Let $\bx$ denote~\cref{alg:Binary}'s final allocation, and consider the following LP and its dual:
\begin{equation*}
\begin{aligned}[t]
(P)\qquad 
&\max_{\wv\in\mathbb{R}_+^{T}}\ \frac{1}{N}\sum_{i=1}^N \frac{u_i( \wv)}{u_i( \xv)} \\
&\text{s.t.}\ \sum_{t=1}^T w_{t} \leq 1 
\end{aligned}
\qquad \qquad
\begin{aligned}[t]
\label{eq:price_lp}
(D) \qquad&\min_{p\in \mathbb{R}_+}\ p \\
&\text{s.t.} \ p \geq \frac{1}{N}\sum_{i=1}^N \frac{v_{i,t}}{u_i( \bx)}\quad\; \forall\,t\in[T]\\
\end{aligned}
\end{equation*}
By construction, $\gamma$ is feasible to the dual LP, because
$$\gamma 
\stackrel{(a)} \geq \frac{1}{N} \sum_{i=1}^N \frac{v_{it}}{\tilde{u}_{it}(z_t)}
 \stackrel{(b)}\geq \max_i \frac{1}{N}\sum_{i=1}^N \frac{v_{it}}{u_i({\xv})},
$$
where $(a)$ is by definition of the algorithm, and $(b)$ is because promised utilities are a lower bound on the true final utility. Thus by weak duality,
$$\max_{\wv\in\mathbb{R}_+^{T}}\ \frac{1}{N}\sum_{i=1}^N \frac{u_i( \wv)}{u_i(\xv)} \leq \gamma = 2\ln(2N).$$
To complete the proof, it remains to show that $\xv$ is a feasible allocation.
Since $\sum_{t=1}^T y_t \leq \frac12$, it suffices to show that $\sum_{t=1}^T z_t \leq \frac12$. Now we have 
\begin{align*}
    \gamma \sum_{t=1}^T {z}_t
    &= \sum_{t=1}^T {z}_t \cdot \frac{1}{N} \sum_{i=1}^N \frac{v_{it}}{\tilde{u}_{it}{({z}_t)}} \hspace{2.7cm}\big(\text{since ${z}_t > 0$ implies $\gamma =\frac{1}{N} \sum_{i=1}^N \frac{v_{it}}{\tilde{u}_{it}(z_t)} $}\big) \\
    &= \frac{1}{N} \sum_{i=1}^N \sum_{t=1}^T \frac{{z}_t v_{it}}{\tilde{u}_{it}({z}_t)}
    = \frac{1}{N} \sum_{i=1}^N \sum_{t=1}^T \left(1 - \frac{\tilde{u}_{it}(0)}{\tilde{u}_{it}({z}_t)} \right) \\
    &\leq \frac{1}{N} \sum_{i=1}^N \sum_{t=1}^T \left[\ln(\tilde{u}_{it}({z}_t)) - \ln(\tilde{u}_{it}(0))\right]
    \qquad\text{(since $1 - x \leq - \ln(x)\,\forall\,x\in\mathbb{R}$)}\\
    &= \frac{1}{N} \sum_{i=1}^N \left[\ln(\tilde{u}_{iT}({z}_T)) -\ln(\tilde{u}_{i1}(0))  \right]
    = \frac{1}{N} \sum_{i=1}^N \left[\ln(\tilde{u}_{iT}({z}_T)) -\ln\left(\frac{1}{2N}\right) \right] \\
    &= \frac{1}{N} \sum_{i=1}^N  \ln\left(\tilde{u}_{i}({\xv})\right)+\ln(2N)  \qquad\qquad\text{(since predicted utility lower bounds final utility)}\\
    &\leq \ln(2N). \hspace{4cm}\text{(since final utilities are at most 1)}
 \end{align*}
 Hence, in order for $\sum_{t=1}^T {z}_t \leq \frac12$, it suffices to have $\gamma = 2\ln(2N)$. This is why we defined $\gamma$ to be equal to this quantity in the algorithm.  
\end{proof}




\section{Proof of Theorem~\ref{thm:general}}
As indicated in \Cref{sec:general_alg}, we only need to prove that the greedy semi-allocations combinatorially computed in \Cref{alg:General-L-1} still satisfy the properties in \Cref{prop:properties_batched}. The result below states these properties for $L=1$. Note that the first property in \Cref{prop:properties_batched} becomes vacuous for $L=1$.

\begin{proposition}[Some Properties of the Algorithm]
\label{prop:properties}
Consider the dual certificate:
\begin{align*}
q &= \frac{\alpha}{2B} \geq \frac{2}{B} \left[\ln\left(\frac{2T}{B}\right) + \frac{1}{N} \sum_i \ln(d_i)\right]\\
p_t &= \max\left\{0, \frac{1}{N} \sum_{i=1}^N \frac{v_{i,t}}{\tilde{u}_{i,t}(\zv_t)} - q\right\}.
\end{align*}
Then, the allocation $\xv=\yv+\zv$ returned by \Cref{alg:General-L-1} satisfies the following:
\begin{enumerate}
	\item ${z}_{t} < 1 - \frac{B}{2T} \implies p_t = 0$.
	\item ${z}_{t} > 0 \implies p_t + q =  \sum_{i=1}^N \frac{v_{i,t}}{\tilde{u}_{i,t}({z}_t)}$. 
\end{enumerate}
\end{proposition}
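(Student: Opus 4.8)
The plan is to exploit the fact that with a single good per round ($L=1$) the greedy step reduces to a one-dimensional threshold computation, so both properties can be read off from the monotonicity of the relevant ratio rather than from the KKT conditions used in \Cref{prop:properties_batched}. Write $\Phi(z) := \frac{1}{N}\sum_{i=1}^N \frac{v_{i,t}}{\tilde{u}_{i,t}(z)} = \frac{1}{N}\sum_{i=1}^N \frac{v_{i,t}}{\gamma_i + v_{i,t}\,z}$, where $\gamma_i = \frac{B}{2T}\tilde{V}_i + \sum_{\tau<t} v_{i,\tau}z_\tau$ is the part of the promised utility that is fixed before round $t$. The first step is to establish that $\Phi$ is continuous and non-increasing on $[0,\infty)$ -- each summand $\frac{v_{i,t}}{\gamma_i + v_{i,t}z}$ is non-increasing in $z$ -- and that $\Phi(z)\to 0$ as $z\to\infty$. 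Since $q = \frac{\alpha}{2B} > 0$ by the choice of $\alpha$, the set $\{z\ge 0 : \Phi(z)\le q\}$ is then non-empty and closed, so its minimum $z_t^*$ is attained and finite, and $z_t = \min\{z_t^*,\, 1 - \tfrac{B}{2T}\}$ is well-defined.

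The second step is a characterization of $z_t^*$ that drops out of its definition as the least $z$ with $\Phi(z)\le q$: (i) $\Phi(z_t^*)\le q$ (membership in a closed set), and (ii) $\Phi(z) > q$ for every $z < z_t^*$. Letting $z\uparrow z_t^*$ in (ii) and invoking continuity gives $\Phi(z_t^*)\ge q$ whenever $z_t^* > 0$, so in fact $\Phi(z_t^*) = q$ at any interior threshold. With this in hand the two properties follow by a short case analysis on whether the cap $1-\tfrac{B}{2T}$ binds. For Property~1, if $z_t < 1-\tfrac{B}{2T}$ then the cap is not active, so $z_t = z_t^*$, and (i) gives $\Phi(z_t)\le q$, whence $p_t = \max\{0,\Phi(z_t)-q\} = 0$. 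For Property~2, suppose $z_t > 0$: if $z_t = z_t^*$ then $z_t^* > 0$, so $\Phi(z_t) = q$ by the characterization and $p_t+q = q = \Phi(z_t)$; if instead $z_t = 1-\tfrac{B}{2T} < z_t^*$, then $z_t$ lies strictly below the threshold, so $\Phi(z_t) > q$, giving $p_t = \Phi(z_t)-q$ and again $p_t+q = \Phi(z_t) = \frac{1}{N}\sum_{i=1}^N \frac{v_{i,t}}{\tilde{u}_{i,t}(z_t)}$, which is the claim.

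The substance of the argument is thus just the one-dimensional monotonicity of $\Phi$ together with attainment of the minimum; the rest is bookkeeping. I expect the only real care to be needed in the degenerate cases ensuring $\Phi$ is genuinely continuous. An agent with $\tilde{V}_i = 0$ must, by \Cref{def:predictions}, have $V_i = 0$ and hence $v_{i,\tau}=0$ in every round, so such an agent contributes identically zero to $\Phi$ and can be discarded; for every remaining agent $\gamma_i \ge \frac{B}{2T}\tilde{V}_i > 0$, keeping all denominators strictly positive. The main (though modest) obstacle is to confirm that the combinatorial $z_t^*$ computed by \Cref{alg:General-L-1} is exactly the point where $\Phi$ crosses $q$ -- that is, that the combinatorial step lands where a solver of the $L=1$ convex program would -- after which the three properties of \Cref{prop:properties_batched} (the first of which is vacuous for $L=1$) hold verbatim and the remainder of the proof of \Cref{thm:general} goes through unchanged.
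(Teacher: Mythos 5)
Your proof is correct and follows essentially the same route as the paper's: both arguments rest on the fact that $z_t^*$ is the least $z\ge 0$ with $\Phi(z)\le q$, combined with continuity and monotonicity of $\Phi$, and a check of whether the cap $1-\frac{B}{2T}$ binds. Your write-up is somewhat more careful than the paper's (attainment of the minimum, the $\tilde{V}_i=0$ degenerate case, and the explicit limit argument giving $\Phi(z_t^*)=q$ at an interior threshold), but the substance is identical.
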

\begin{proof}
(1) follows because if $z_t < 1 -\frac{B}{2T}$, then that must mean $z_t = z^*_t$. Hence, $\frac{1}{N} \sum_{i=1}^N \frac{v_{i,t}}{\tilde{u}_{i,t}(z_t)} \leq \frac{\alpha}{2B} = q$, which implies $p_t = 0$. 

(2) follows because if $z_t > 0$, then that must mean $z^*_t > 0$. Let $\Phi(z) = \frac{1}{N} \sum_{i=1}^N \frac{v_{i,t}}{\tilde{u}_{i,t}(z)}$. Since $\Phi$ is a continuous function in $z$, this implies  $\Phi(z^*_t) = \frac{\alpha}{2B} = q$. Because $z_t  = \min\{z^*_t, 1 - y_t\} \leq z^*_t$ and $\Phi$ is decreasing, we have $\Phi(z_t) \geq \Phi(z^*_t) = q$. Thus $p_t = \max\{0, \Phi(z_t) - q\} = \Phi(z_t) - q$.  
\end{proof}




\end{document}